\def\maketag@@@#1{\hbox{\m@th\normalfont\normalsize#1}}
\newtheorem{property}{Property}
\newtheorem{proposition}{Proposition}
\begin{document}

\title{Filter Design for Autoregressive Moving Average Graph Filters}

\author{Jiani~Liu,~\IEEEmembership{Student Member,~IEEE,}
        Elvin~Isufi,~\IEEEmembership{Student Member,~IEEE,}
        and~Geert~Leus,~\IEEEmembership{Fellow,~IEEE,}
\thanks{The authors are with the Faculty of Electrical Engineering, Mathematics and Computer Science, Department of Microelectronics, Delft University of Technology, 2628 CD Delft, The Netherlands. E-mails: \{j.liu-1, e.isufi-1, g.j.t.leus\}@tudelft.nl.  }
\thanks{ J.Liu is funded by the China Scholarship Council. This work is supported by the Circuits and Systems group, Delft University of Technology, The Netherlands.}
}
\maketitle

\begin{abstract}
In the field of signal processing on graphs, graph filters play a crucial role in processing the spectrum of graph signals. This paper proposes two different strategies for designing autoregressive moving average (ARMA) graph filters on both directed and undirected graphs. The first approach is inspired by Prony's method, which considers a modified error between the modeled and the desired frequency response. The second technique is based on an iterative approach, which finds the filter coefficients by iteratively minimizing the true error (instead of the modified error) between the modeled and the desired frequency response. The performance of the proposed algorithms is evaluated and compared with finite impulse response (FIR) graph filters, on both synthetic and real data. The obtained results show that ARMA filters outperform FIR filters in terms of approximation accuracy and they are suitable for graph signal interpolation, compression, and prediction.
\end{abstract}

\begin{IEEEkeywords}
Signal processing on graphs, autoregressive moving average graph filters, iterative processing, Prony's method.
\end{IEEEkeywords}

\IEEEpeerreviewmaketitle

\section{Introduction}

\IEEEPARstart{G}{raph} signal processing (GSP) extends classical digital signal processing to signals that live on the vertices of irregular graphs \cite{D, AJM}.
Similar to the frequency analysis of temporal signals, the definition of a Fourier-like transform for graph signals \cite{AJ} is a handle to process these signals in the so-called graph frequency domain, rather than only in the vertex domain \cite{ASJ2}. In this analogy, the frequency components of the graph signal characterize, now, the signal variation over the graph. The graph Fourier transform (GFT) has been defined in two ways, i.e., the projection of the graph signal onto the graph Laplacian eigenspace, see e.g., \cite{D}, or onto the eigenspace of the adjacency matrix, see e.g., \cite{ASJ2}. The first approach suits better undirected graphs characterized by \emph{real-valued} graph frequencies, whilst the second approach is preferred for directed graphs characterized by \emph{complex-valued} graph frequencies. Note that instead of the graph Laplacian or adjacency matrix, other so-called graph shift operators can also be considered~\cite{chung1997spectral}.

Together with the GFT, graph filters are a key tool to process the graph signal spectrum, i.e., to amplify or attenuate different graph frequencies. Graph filters find applications in graph signal denoising \cite{S, R3, R4}, smoothing \cite{zhang2008graph}, classification \cite{AS}, sampling \cite{R6}, recovery \cite{SC1} and graph clustering \cite{TPGV}. Further, they serve as a basic building block for trilateral graph filters \cite{R4}, graph filter banks \cite{AY,R7} and graph wavelets \cite{R1, DK, R2, R5}.
Finite impulse response (FIR) graph filters \cite{shuman2011distributed,ASJ2,segarra2017optimal,Coutinio2017}, direct analogs of temporal FIR filters, are implemented as a polynomial in the graph shift operator, e.g., the graph Laplacian matrix \cite{D}, the adjacency matrix~\cite{AJ}, or any modification thereof. To accurately match some prescribed specifications in the graph frequency domain, FIR filters require a high filter order leading to a high implementation cost. Furthermore, being matrix polynomials of the graph shift operator, their accuracy remains limited. This issue is especially present when the desired graph frequency response is characterized by sharp transitions, e.g., a step function, which is often required in clustering, graph filter banks, or to discriminate patients with different levels of adaptability in brain networks~\cite{LWNT}.

FIR filter design is already a well-established theory. One of the most popular approaches to fit the graph frequency response of the FIR filter to a desired spectrum is through solving a linear least squares (LLS) fitting problem~\cite{ASJ2}, which can be carried out for undirected as well as directed graphs. However, since the graph (and thus the set of graph frequencies) is not always perfectly known, techniques have been established to design the FIR filter coefficients without the knowledge of the graph spectrum, by fitting the frequency response over a continuous range of graph frequencies (we call this the universal design approach). The Chebyshev polynomial technique is a popular method in this context, but has only been introduced for undirected graphs in~\cite{shuman2011distributed}.
In this paper, we extend the LLS approach to a universal design method by gridding not only the real line (for undirected graphs), but also the complex plane (for directed graphs) and by subsequently solving the LLS problem for the graph frequencies that are on the grid.

An alternative to FIR graph filters are infinite impulse response (IIR) graph filters, such as the autoregressive moving average (ARMA) graph filters~\cite{EA}, or the gradient descent IIR graph filters~\cite{XM}. These filters are characterized by a rational graph frequency response, which brings more degrees of freedom to the design. However, the aforementioned works focus on a distributed implementation, which only leads to the modeled frequency response after an infinite number of iterations~\cite{EA,XM}. Further, the distributed implementation limits the filter approximation accuracy due to convergence constraints.

To fully exploit the benefits of the rational frequency response, in this paper, we focus on a centralized ARMA filter implementation. In a centralized fashion, the ARMA output can be simply found by solving a linear system of equations, which can be carried out efficiently with first order methods~\cite{PCVX} or conjugate gradient (CG)~\cite{SJA}. Based on this centralized implementation, we also propose new ARMA graph filter design methods, which can be adopted when the graph is known or in a universal fashion by gridding the frequency domain (as done for the LLS FIR filter design). The proposed ARMA design and implementation methods work for undirected as well as directed graphs. Throughout this work, we will mainly use FIR filters as a benchmark to assess the performance of the proposed ARMA filters, being their direct competitors, and propose ARMA filters as an alternative for the aforementioned applications.

The paper contribution is threefold:\vskip.1cm
\emph{i) We extend the universal LLS strategy to design FIR graph filters from undirected to directed graphs.} For either the normalized Laplacian (undirected graph) or normalized adjacency (directed graph) matrix, we respectively sample the real interval from zero to two or the complex unit disc. The first is done uniformly, whereas the second is done uniformly in amplitude and phase such that the obtained graph frequencies either appear in complex conjugate pairs or are purely real-valued. After the grid points have been determined, LLS is used to fit the response on these grid points.
\vskip.1cm
\emph{ii) We present an efficient centralized ARMA filter implementation.} ARMA filtering of graph signals is written as a linear system of equations, which can be solved by efficient of-the-shelf algorithms, such as CG \cite{SJA}. We propose the details of this implementation algorithm and present some simulation results.
\vskip.1cm
\emph{iii) We propose two ARMA graph filter design strategies, which can be applied to both directed and undirected graphs.} The first one is inspired by Prony's method \cite{M}, where a modified error between the modeled and the desired frequency response is minimized.
Meanwhile, the second approach minimizes the true error iteratively following the Steigliz-McBride idea~\cite{M}. As an initial condition, we can use the solution from the first method, thereby potentially improving the approximation accuracy of that solution. The two proposed methods can also be extended to a universal design by gridding the graph frequency domain as mentioned earlier.

Several numerical tests validate our findings with both synthetic and real data. We show that the ARMA filters outperform FIR filters in terms of approximation accuracy, even with fewer filter coefficients. In our tests with the real Molene temperature dataset, ARMA graph filters are used for interpolation purposes (on an undirected graph). With the same dataset, ARMA filters are also utilized to compress (on a directed graph) and predict (on both a directed and undirected graph) the graph signal. The results show that the error resulting from our ARMA filter design is lower than that resulting error from an FIR filter with the same number of filter coefficients.

{\it Paper outline}: Section~\ref{Sec.Pre} reviews some basic concepts of signal processing on graphs and introduces the concept of universal graph filter design. In Section~\ref{Sec.ARMA}, we introduce ARMA graph filtering, and the related ARMA　filter implementation. Section~\ref{sec.ARMAdes} contains the filter design problem and the proposed design strategies, while the simulation results are shown in Section~\ref{Sec.Numerical data}. Finally, the conclusions are drawn in Section~\ref{Sec.Concl}.

{\it Notation}: We indicate by normal letters $a$ or $A$ a scalar variable; a bold lowercase letter $\bf{a}$ will represent a vector variable and a bold uppercase letter $\bf{A}$ a matrix variable. We indicate the absolute value of $a$ by $|a|$ and the 2-norm of the vector $\bf{a}$ and matrix $\bf{A}$ by $\left\|\bf{a}\right\|$ and $\left\|\bf{A}\right\|$, respectively. $a_i$ or $[{\bf a}]_i$ represents the $i$-th entry of $\bf{a}$, and similarly $A_{i,j}$ or $[{\bf A}]_{i,j}$ represents the $(i,j)$-th entry of $\bf{A}$. ${\bf{a}}^{(i)}$ will indicate the value of $\bf{a}$ after the $i$-th iteration. ${{\bf{A}}^\dag }$ represents the pseudo-inverse of the matrix ${\bf{A}}$. We use ``$\circ$'' to represent the element-wise Hadamard product. We indicate the transpose and Hermitian of the matrix $\bf{A}$ by $\bf{A}^{\rm{T}}$ and ${\bf A}^{\rm H}$, respectively. The complex conjugate of $\bf{a}$ and ${\bf{A}}$ are represented as ${\bf{a}}^{*}$ and ${\bf{A}}^{*}$, respectively.

\section{Problem statement}
\label{Sec.Pre}

This section recalls some background information that will be used throughout the paper. We start with some preliminaries about GSP and graph filtering. Then, we formulate the general problem of graph filter design for some prescribed spectral requirements on both undirected and directed graphs. The notions of universal design and a review of the challenges in designing FIR graph filters conclude the section.

\subsection{Preliminaries}

Consider a graph $\mathcal{G = (V, E)}$ with $\mathcal{V}$ the set of $N$ nodes (vertices) and $\mathcal{E}$ the set of $E$ edges. The local structure of $\mathcal{G}$ is captured by the adjacency matrix ${\bf{A}} \in {\mathbb R}^{N \times N}$, where ${{[{\bf{A}}]_{j,i}} \ne 0}$ if there exists an edge between the nodes ${v_i}$ and ${v_j}$, or by the discrete graph Laplacian ${\bf L}_{\text{d}} = {\bf D} - {\bf A} \in {\mathbb R}^{N \times N}$, where ${\bf D}$ is the diagonal degree matrix with diagonal entries defined as $[{\bf D}]_{i,i} = \sum_{j=1}^N [{\bf A}]_{i,j}$ (in-degree matrix) or $[{\bf D}]_{i,i}= \sum_{j=1}^N [{\bf A}]_{j,i}$ (out-degree matrix). Note that for an undirected graph $\mathcal{G}$, every edge between ${v_i}$ and ${v_j}$ leads to a similar edge between ${v_j}$ and ${v_i}$, and thus ${\bf A}$ is symmetric, i.e., ${{[{\bf{A}}]_{i,j}} = {[{\bf{A}}]_{j,i}}}$. This means that also the discrete graph Laplacian ${\bf L}_{\text{d}}$ is symmetric and there is no difference between using the in-degree or out-degree matrix. For directed graphs $\mathcal{G}$, such properties do not hold.

Throughout this paper we will use the adjacency matrix $\bf{A}$ as a representative for directed graphs, while for undirected graphs we use as alternative the discrete graph Laplacian ${\bf{L}_\text{d}}={\bf{D}}-{\bf{A}}$. More specifically, we will use their normalized counterparts, i.e., the normalized adjacency matrix ${\bf A}_{\text{n}} = {\bf A} / \| {\bf A} \|$ for directed graphs and the normalized Laplacian matrix ${\bf{L}}_\text{n}{\rm{ = }}{{\bf{D}}^{-{1 \mathord{\left/ {\vphantom {1 2}} \right. \kern-\nulldelimiterspace} 2}}}{\bf{L}_\text{d}}{{\bf{D}}^{-{1 \mathord{\left/ {\vphantom {1 2}} \right. \kern-\nulldelimiterspace} 2}}}$, for undirected graphs. Note that other alternatives can also be used. In short, every one of these graph representations can be referred to as a so-called graph shift operator $\bf{S}$, an operator that forms the basis for processing graph signals, as we will see next.
In this paper, we restrict ourselves to graphs for which ${\bf{S}}$ is real-valued and diagonalizable, and thus enjoys an eigenvalue decomposition $ {\bf S} = {\bf{U\Lambda}} {{\bf{U}}^{\rm {-1}}}$, with ${{\bf{U}}}$ the eigenvector matrix containing as columns the so-called graph modes ${\bf u}_1$ up to ${\bf u}_N$ and ${\bf{\Lambda}}$ the diagonal eigenvalue matrix containing as diagonal entries the so-called graph frequencies $\lambda_1$ up to $\lambda_N$ (note in this context that $\| {\bf S} \| = \max_n | \lambda_n | = | \lambda_{\max} |$).

For undirected graphs, ${\bf{S}}$ is symmetric and normal. The graph frequencies are in this case real-valued, and since ${\bf S}$ is real-valued, the graph modes are assumed to be real-valued as well (note that in some cases, they can be chosen to be complex-valued, e.g., for undirected circulant graphs, but that is not assumed in this paper). Specifically, for an undirected graph with ${\bf{S}} = {\bf{L}}_\text{n}$, the graph frequencies are in the real interval from zero to two. They can be ordered from small to large, where a smaller value indicates a lower frequency~\cite{D}.

For directed graphs, the graph frequencies are complex-valued but since ${\bf S}$ is real-valued they either appear in complex conjugate pairs or are purely real-valued. Moreover, the related graph modes also appear in complex conjugate pairs or are purely real-valued. Specifically, for the shift operator ${\bf{S}} = {\bf{A}}_\text{n}$, the graph frequencies are in the complex unit disc.
They can be ordered by the graph total variation of the related graph modes, which is defined as $\text{TV}_{{\mathcal G}} ({\bf{u}}_{n}) = \left| {1 - {\lambda_{n}  \mathord{\left/ {\vphantom {\lambda  {\left| {{\lambda _{\max }}} \right|}}} \right. \kern-\nulldelimiterspace} {\left| {{\lambda _{\max }}} \right|}}} \right|{\left\| {{\bf{u}}_{n}} \right\|_1}$. In other words, the frequencies are ordered according to the similarity between the $n$th graph mode and its graph shifted version.
Graph frequencies closer to the point $(1,0)$ in the complex plane will represent lower frequencies in this context~\cite{ASJ2}. See Fig. 1 for an example of a directed graph and its complex-valued graph frequencies.

We will indicate with the vector ${\bf{x}} \in {\mathbb R}^{N \times 1}$ the real-valued graph signal, i.e., a signal living on the nodes of the graph $\mathcal{G }$, where each value ${x_i}$ is associated to the node ${v_i}$. To obtain the graph frequency representation of ${\bf{x}}$, the eigenvector matrix ${\bf{U}}$ is used to transform the signal into the graph Fourier domain. Specifically, the GFT ${\hat{\bf x}}$ of ${\bf{x}}$ and its inverse are, respectively, ${{\hat{\bf x}}={\bf U}^{-1}{\bf{x}} }$ and ${{ {\bf x}}={\bf U}{\hat {\bf{x}}} }$. The following property can now be stated.

 \begin{property}\label{prop:GFT}
For either an undirected or directed graph ${\mathcal G}$, let us denote $\hat{x}_n$ as the $n$th frequency coefficient of the graph signal ${\bf x}$.
Then, the frequency coefficient $\hat{x}_n$ related to the real-valued graph frequency (mode) $\lambda_n$ (${\bf u}_n$) is real-valued as well. Meanwhile, the frequency coefficients $\hat{x}_n$ and $\hat{x}_{n'}$ related to the complex conjugate pair of graph frequencies (modes) $\lambda_n$ and $\lambda_{n'}$ (${\bf u}_n$ and ${\bf u}_{n'}$) form a complex conjugate pair as well.
\end{property}

This property is built on the fact that for a real-valued matrix ${\bf S}$, eigenvalues and eigenvectors appear in complex conjugate pairs~\cite{edwards2004differential, sinswat1977eigenvalue}. This also means that if the columns ${\bf u}_n$ and ${\bf u}_{n'}$ in the matrix ${\bf U}$ form a complex conjugate pair, the related rows in the matrix ${\bf U}^{-1}$ form a complex conjugate pair. Thus, with ${\bf U}^{-1}{\bf{x}}$, the frequency coefficients $\hat{x}_n$ and $\hat{x}_{n'}$ appear as a complex conjugate pair.

\begin{figure}[!t]
  \centering
  \includegraphics[trim={1.5cm 0 2cm 0},clip,width=.48\textwidth]{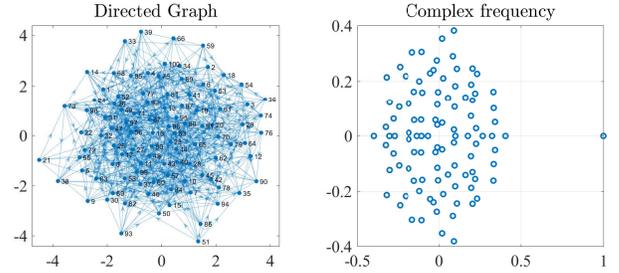}
  \caption{Directed graph of $N=100$ nodes with $E = 752$ edges having different weights in the interval $[0, 3]$. Complex-valued frequencies are generated by the eigenvalue decomposition of the normalized adjacency matrix ${\bf{A}}_{\text{n}}$. The ``largest'' frequency has magnitude one. Some frequencies live on the real axis while the remaining frequencies appear as conjugate pairs in the complex plane. }
  \label{fig:dgcpx}
\end{figure}

For a more in-depth analysis on the basics of the GFT and the ordering of graph frequencies we redirect the reader to \cite{D},  \cite{AJ}, and \cite{ASJ2}.

\subsection{Graph filtering}\label{sec:filtering}

A graph filter ${\bf{G}}$ is a function $g( \cdot )$ applied to the shift operator $\bf{S}$, i.e., ${\bf{G}} ={{g}{({\bf S} )}}$, that allows an eigendecomposition of $\bf{G}$ in the form $ {\bf G} = {\bf U}g({\bf \Lambda}){\bf U}^{-1}$, where $g({\bf \Lambda})$ is a diagonal matrix that highlights the filter impact on the graph frequencies ${\bf \Lambda}$. More specifically, the filter output ${\bf y}$ for a filter input ${\bf x}$ can be written as ${\bf y}={\bf G}{\bf x}$, which in the graph frequency domain can be translated into ${\hat{\bf y} = g( {\bf \Lambda} ) \hat{\bf x}}$, where $\hat{\bf x}$ and $\hat{\bf y}$ represent the GFT of the input and output signal, respectively. Hence, ${ g( {\bf \Lambda} ) }$ has on the diagonal the frequency response of the filter, which at frequency $\lambda_n$ we denote as $[  g( {\bf \Lambda} ) ]_{n,n} = \hat{g}_n$.

Throughout this paper, we will consider different parametrizations of the graph filter function $g(\cdot)$, and thus we will often explicitly write this function as $g(\cdot ; {\boldsymbol \theta})$, where ${\boldsymbol \theta}$ is a vector that contains the graph filter parameters, i.e., filter coefficients, zeros and poles, or any other set of filter parameters. Correspondingly, we can also write $\hat{g}_n$ explicitly as $\hat{g}_n({\boldsymbol \theta})$. Assuming now that the desired frequency response at frequency ${\lambda_n}$ is given by ${{\hat h}_n}$, the filter parameters ${\boldsymbol \theta}$ can be found by solving
\begin{equation}\label{eq:FilG1}
\mathop {\min }\limits_{\boldsymbol \theta} \sum_{n=1}^N | {{{\hat h}_n} - {{\hat g}_n}( {\boldsymbol \theta} )} |^2  .
\end{equation}
The desired frequency response $\hat{h}_n$ can originate from different scenarios. For instance, when we focus on graph filter design, i.e., when we want to design a low pass filter to smooth or denoise a graph signal, the desired frequency response $\hat{h}_n$ basically indicates how much we want to attenuate a specific graph mode and thus it will generally be real-valued and symmetric w.r.t. the real axis (for both undirected and directed graphs). Also when we want to do graph signal prediction, as done in~\cite{AJ}, we basically want to design an all-pass filter and set $\hat{h}_n$ to be one (and thus real-valued) everywhere. In this case, the cost function~\eqref{eq:FilG1} will also be weighted, as we will show in the simulations, but the filter design methods that we derive later on can easily be adapted to this weighting. However, for some GSP applications, such as compression, the desired frequency response $\hat{h}_n$ will be the GFT of the signal, for which Property~\ref{prop:GFT} holds.

In any case, whatever the scenario (filter design, prediction, smoothing, denoising, or compression) or type of graph (undirected or directed), the following property holds.

\begin{property}\label{prop:design}
As mentioned above, $\hat{h}_n$ is selected either as real-valued and symmetric w.r.t. the real frequency axis, or as the GFT of a signal. The latter means that $\hat{h}_n$ is real-valued if $\lambda_n$ is real-valued while $\hat{h}_n$ and $\hat{h}_{n'}$ form a complex conjugate pair if $\lambda_n$ and $\lambda_{n'}$ form a complex conjugate pair (this is due to Property~1). Put differently, either way we select $\hat{h}_n$, if $\lambda_n$ is real-valued, then $\hat{h}_n$ is real-valued whereas if $\lambda_n$ and $\lambda_{n'}$ form a complex conjugate pair, then $\hat{h}_n$ and $\hat{h}_{n'}$ form a complex conjugate pair as well.
\end{property}

\subsection{Universal filter design}\label{sec:universal}

\begin{figure}[!t]\label{fig:lowpass}
\centering
\subfloat[]{\includegraphics[trim={1.5cm 0 2cm 0},clip,width=.48\textwidth]{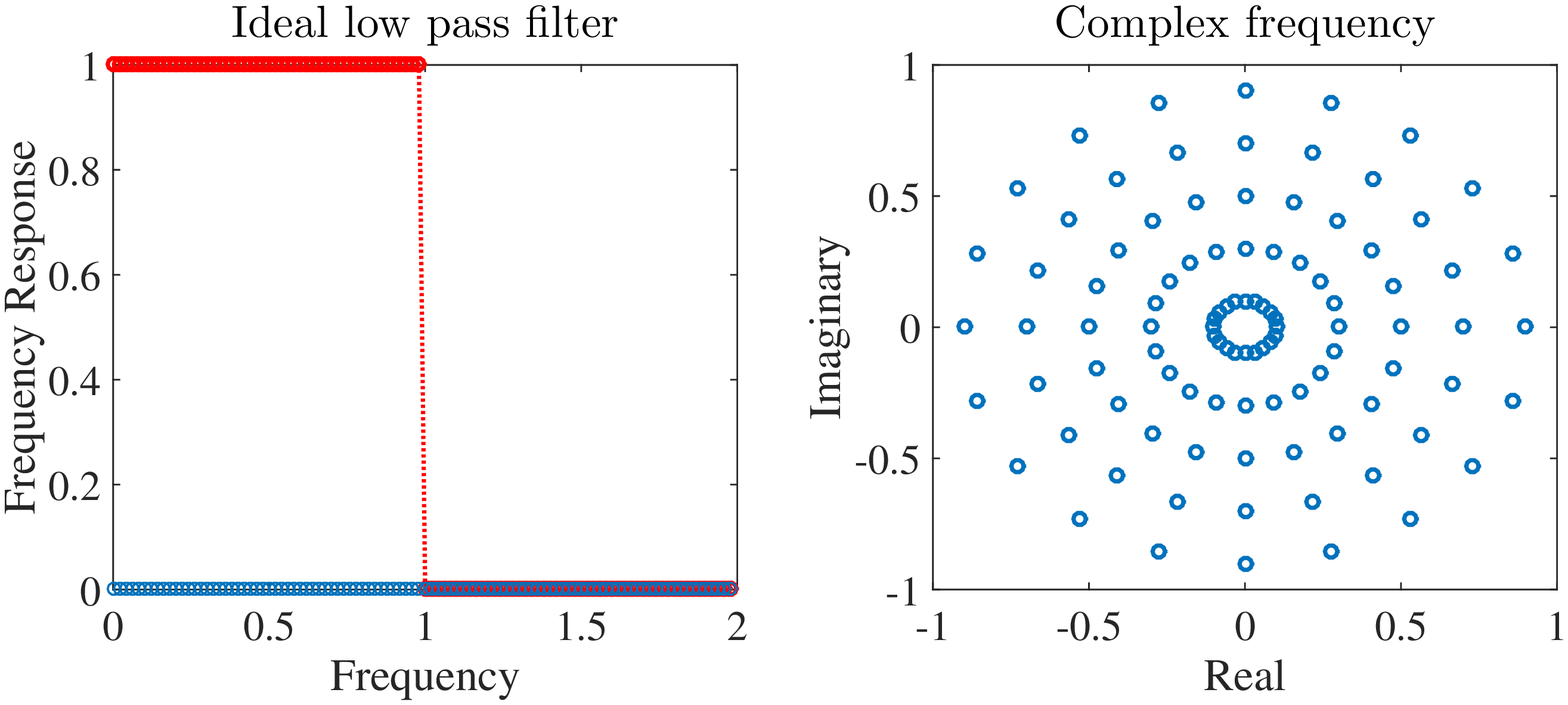}%
\label{subfig1}}\\
\subfloat[]{\includegraphics[trim={1.5cm 0 2cm 0},clip,width=.48\textwidth]{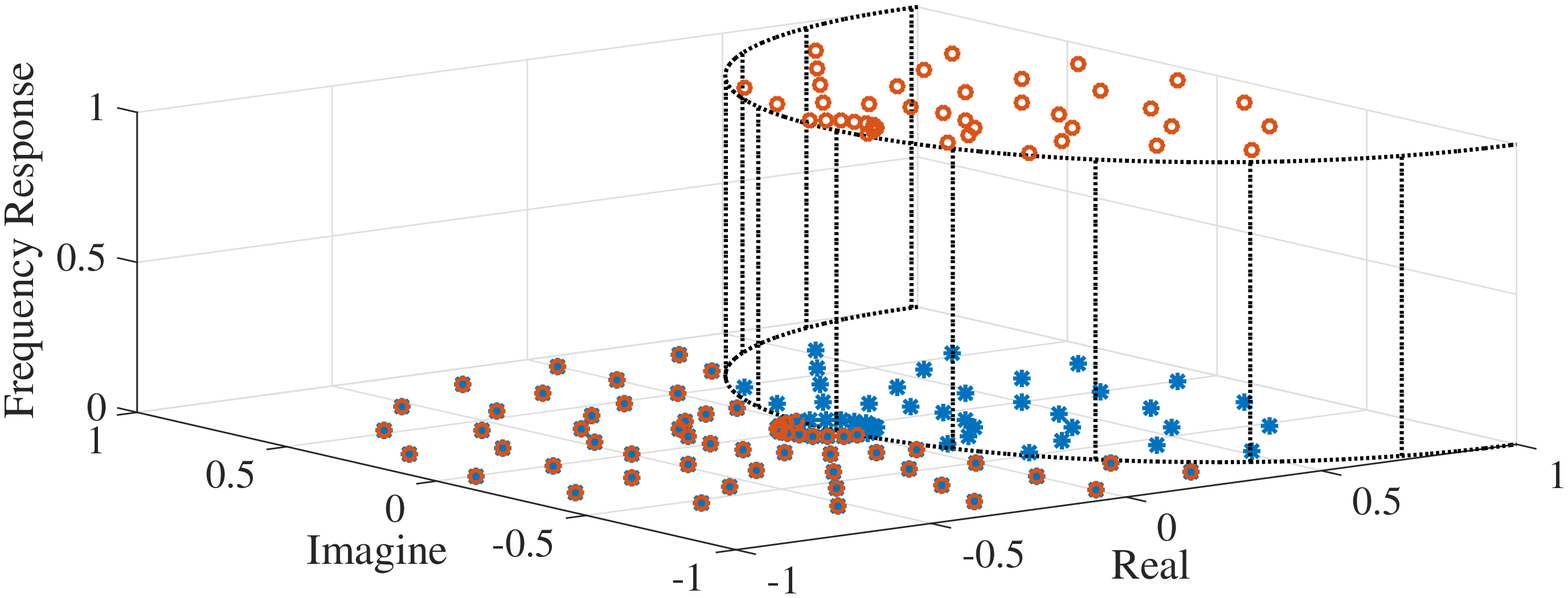}%
\label{subfig1}}
\caption{(a) (left) Ideal low pass filter response of universal design for undirected graph ($N=100$). (a) (right) Universal gridding for directed graph associated with the normalized adjacency matrix  ${\bf{A}}_{\text{n}}$ ($N=100$). (b) Ideal low pass filter response of universal design for directed graph with $N=100$. The complex frequencies lying inside the circle with radius 1 centered at $(1, 0)$ are ''small'' frequencies.}
\label{fig:filg}
\end{figure}

Since estimating the graph frequencies entails some additional complexity, graph filters are often designed with no explicit knowledge of the graph or the graph frequencies. The desired frequency response is assumed to be a function over a continuous range of frequencies (the real line for undirected graphs or the complex plane for directed graphs). Solving the filter design problem for such a scenario is referred to as \emph{universal filter design}. Following the same LLS approach as in~\eqref{eq:FilG1}, this universal filter design problem can be tackled by discretizing the related continuous frequency range into a finite set of graph frequencies. Then problem~\eqref{eq:FilG1} can be solved for this finite set of graph frequencies instead of for the true graph frequencies.

For undirected graphs with ${\bf{S}} = \bf{L}_\text{n}$, we can consider for instance $N$ different grid points in the interval $[0, 2]$. Note that depending on the graph, we obtain a different eigenvalue spread, e.g., the eigenvalues of an Erd\H{o}s  R\'{e}nyi graph~\cite{PA} are in general closely spread around 0 and more widely spread around $p$, the link probability of the graph (see Fig.~\ref{fig:simMP} (f)). However, since we want to be independent of the graph topology, we consider a uniformly-spaced grid in our design. As an example, we show the graph spectrum for an ideal low pass graph filter with cutoff frequency $\lambda_c = 1$ in Fig.~\ref{fig:filg}(a) left.

Alternatively, for directed graphs with ${\bf{S}} = \bf{A}_\text{n}$, the graph frequencies lie in the complex unit disc. Again trying to avoid any dependence on the graph, we suggest gridding this disc by $N$ complex conjugate pairs of points, as shown in Fig.~\ref{fig:filg}(a) right. Fig.~\ref{fig:filg}(b) again shows an example of an ideal low pass filter in this context. The cutoff frequency $\lambda_c$ is here defined as the distance from the point $(1,0)$ in the complex plane, and it is set as $\lambda_c=1$ in Fig.~\ref{fig:filg}(b). All graph frequencies with a distance to $(1,0)$ that is smaller than $\lambda_c$ will be part of the passband since they yield the ``smaller'' frequencies.

\subsection{FIR graph filters}

From \cite{AJ}, an FIR graph filter ${\bf{G}}$ of order $K$ can be expressed as a $K$-th order polynomial in the graph shift operator
\begin{equation}\label{eq:FIRK}
{\bf{G}} ={{g}{({\bf S}; {\boldsymbol \theta})} } =  \sum\limits_{k = 0}^{K} g_k {\bf S}^k,
\end{equation}
with ${\boldsymbol \theta} = {\bf g} = [ g_0, \dots, g_K]^T$ collecting the FIR filter coefficients.
The filter frequency response at frequency ${\lambda_n}$ can be expressed as
\begin{equation}\label{eq:FIRlambda}
\hat{g}_n = \sum\limits_{k = 0}^{K} g_k \lambda_n^k.
\end{equation}
By stacking the filter frequency response in ${\hat {\bf{g}} = [{{\hat g}_1}, \cdots ,{{\hat g}_N}]^{\rm T}}$, we obtain the relation
\begin{equation}\label{eq4}
\hat {\bf g} = {\bf{\Psi}}_{K+1} {\bf{g}},
\end{equation}
where ${{\bf{\Psi }}_{K+1}}$ is the $N \times (K+1)$ Vandermonde matrix with entries $[{\bf \Psi}]_{n,k} = \lambda_n^{k-1}$.
Assuming the desired frequency response is given by ${\hat{\bf h}} = [ \hat h_1, \cdots , \hat h_N]^{\rm T}$, problem~\eqref{eq:FilG1} can now be written as the following linear least squares (LLS) problem
\begin{equation}\label{eq:FIRproblem}
\min_{{\bf g}} \| \hat{\bf h} -{\bf{\Psi}}_{K+1} {\bf{g}} \|^2.
\end{equation}
The solution of this LLS problem is given by
\begin{equation}\label{FIRpinv}
{\bf{g}} = {\bf{\Psi }}_{K + 1}^\dag \hat {\bf{h}},
\end{equation}
where ${{\bf{\Psi }}_{K + 1}^\dag}$ is the pseudo-inverse of ${{\bf{\Psi }}_{K + 1}}$. As shown in~\cite{ASJ2}, \cite{segarra2017optimal}, ${{\bf{\Psi }}_{K + 1}}$ needs to be well-conditioned for this approach to work well. This will only be the case for small graph sizes $N$ and/or small filter orders $K$. Note that to improve the conditioning, close eigenvalues could be grouped together under the assumption that the desired filter response on those eigenvalues is equal. In any case, the FIR filter order $K$ needs to be small and because of the nature of the polynomial fitting problem, this will lead to a limited accuracy of the FIR filter.

For~\eqref{eq:FIRlambda} to make sense as a graph filter that will be applied to a real-valued graph signal ${\bf x}$, we want the FIR filter coefficients ${\bf g}$ to be real-valued. The next Proposition shows that this is the case.
\begin{proposition}\label{th:1}
Under Property~\ref{prop:design}, the FIR filter coefficients ${\bf g}$ obtained by solving~\eqref{eq:FIRproblem} are real-valued.
\end{proposition}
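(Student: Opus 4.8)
The plan is to show that the minimizer of~\eqref{eq:FIRproblem}, given in closed form by~\eqref{FIRpinv} as ${\bf g}={\bf \Psi}_{K+1}^\dag\hat{\bf h}$, is invariant under complex conjugation, i.e.\ ${\bf g}^*={\bf g}$. The whole argument rests on the fact that the graph frequencies are either real or occur in complex conjugate pairs, so I would first encode this structure by a permutation matrix ${\bf P}$ that maps each index $n$ to the index $n'$ of its conjugate partner $\lambda_{n'}=\lambda_n^*$ (with $n'=n$ whenever $\lambda_n$ is real). Because conjugation is an involution on the frequency set, ${\bf P}$ is a well-defined symmetric permutation, hence an orthogonal matrix obeying ${\bf P}={\bf P}^{\rm T}={\bf P}^{\rm H}={\bf P}^{-1}$ and ${\bf P}^2={\bf I}$.

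Next I would record the two symmetry relations that drive the proof. From the Vandermonde structure $[{\bf \Psi}_{K+1}]_{n,k}=\lambda_n^{k-1}$ and the identity $(\lambda_n^{k-1})^*=(\lambda_n^*)^{k-1}=\lambda_{n'}^{k-1}$, conjugating ${\bf \Psi}_{K+1}$ entrywise simply relabels its rows, so
\begin{equation}\label{eq:psisym}
{\bf \Psi}_{K+1}^* = {\bf P}\,{\bf \Psi}_{K+1}.
\end{equation}
For the target vector I would invoke Property~\ref{prop:design}, which guarantees $\hat h_n^*=\hat h_{n'}$ in every design scenario (real $\hat h_n$ for real $\lambda_n$, conjugate pairs otherwise); this is exactly
\begin{equation}\label{eq:hsym}
\hat{\bf h}^* = {\bf P}\,\hat{\bf h}.
\end{equation}

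The final step is to combine~\eqref{eq:psisym}--\eqref{eq:hsym} with two standard pseudo-inverse identities, namely $({\bf A}^*)^\dag=({\bf A}^\dag)^*$ and $({\bf U}{\bf A})^\dag={\bf A}^\dag{\bf U}^{\rm H}$ for any orthogonal ${\bf U}$. Conjugating~\eqref{FIRpinv} and substituting then yields
\begin{equation}\label{eq:gconj}
{\bf g}^* = ({\bf \Psi}_{K+1}^*)^\dag\,\hat{\bf h}^* = ({\bf P}\,{\bf \Psi}_{K+1})^\dag\,{\bf P}\,\hat{\bf h} = {\bf \Psi}_{K+1}^\dag\,{\bf P}^{\rm H}{\bf P}\,\hat{\bf h} = {\bf \Psi}_{K+1}^\dag\,\hat{\bf h} = {\bf g},
\end{equation}
using ${\bf P}^{\rm H}{\bf P}={\bf I}$. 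Hence ${\bf g}^*={\bf g}$, so ${\bf g}$ is real-valued.

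The computation is short, so I expect the main obstacle to be the bookkeeping behind~\eqref{eq:psisym} and~\eqref{eq:hsym}: verifying that a \emph{single} permutation ${\bf P}$ simultaneously realizes conjugation on the frequency grid (the rows of ${\bf \Psi}_{K+1}$) and on the target vector $\hat{\bf h}$, which is precisely the content that Property~\ref{prop:design} supplies. A secondary point I would state explicitly is that both pseudo-inverse identities hold \emph{without} assuming ${\bf \Psi}_{K+1}$ has full column rank, so the argument covers the rank-deficient minimum-norm case as well and does not rely on invertibility of the normal equations.
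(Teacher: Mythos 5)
Your proof is correct, but it takes a genuinely different route from the paper's. The paper's Appendix A argument is a block-decomposition one: it reorders the frequencies into two conjugate-pair groups and one real group, splits ${\bf \Psi}_{K+1}$ and $\hat{\bf h}$ accordingly into blocks satisfying ${\bf \Psi}_1 = {\bf \Psi}_2^*$, $\hat{\bf h}_1 = \hat{\bf h}_2^*$ with real ${\bf \Psi}_3$, $\hat{\bf h}_3$, and then writes the solution in normal-equations form, ${\bf g} = ({\bf \Psi}_1^H{\bf \Psi}_1 + {\bf \Psi}_1^T{\bf \Psi}_1^* + {\bf \Psi}_3^T{\bf \Psi}_3)^{-1}({\bf \Psi}_1^H\hat{\bf h}_1 + {\bf \Psi}_1^T\hat{\bf h}_1^* + {\bf \Psi}_3^T\hat{\bf h}_3)$, observing that both the Gram matrix and the right-hand side have the form $Z + Z^* + (\text{real term})$ and are therefore real. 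You instead encode the same conjugation symmetry globally through an involutive permutation matrix ${\bf P}$ and prove conjugation invariance ${\bf g}^* = {\bf g}$ of the pseudo-inverse solution directly, via the identities $({\bf A}^*)^\dag = ({\bf A}^\dag)^*$ and $({\bf U}{\bf A})^\dag = {\bf A}^\dag {\bf U}^{\rm H}$ for unitary ${\bf U}$ (both of which are standard and verifiable from the Penrose conditions, so that step is sound). The trade-offs are real: the paper's computation is more elementary, needing no Moore--Penrose calculus, but by inverting the Gram matrix it tacitly assumes ${\bf \Psi}_{K+1}$ has full column rank; your argument covers the rank-deficient case, where the distinction matters, since then the null space of ${\bf \Psi}_{K+1}$ is nontrivial and conjugation-closed, so non-real minimizers of~\eqref{eq:FIRproblem} do exist, and only the minimum-norm solution~\eqref{FIRpinv} is guaranteed real --- pinning the statement to the pseudo-inverse solution, as you do explicitly, is exactly the right reading of the proposition. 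A further advantage of your formulation is that the identities ${\bf \Psi}^* = {\bf P}{\bf \Psi}$ and $\hat{\bf h}^* = {\bf P}\hat{\bf h}$ port verbatim to the constrained and projected least-squares problems of Propositions~\ref{th:2}--\ref{th:4} (the constraint $a_0=1$ and the projector ${\bf P}^{\perp}_{{\bf \Psi}_{Q+1}}$ both respect the symmetry), whose proofs the paper only sketches as ``similar.''
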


\begin{proof}
The proof can be found in Appendix A.
\end{proof}

\section{ARMA graph filter and implementation}
\label{Sec.ARMA}

To improve the approximation accuracy and reduce the number of required filter coefficients w.r.t.~the FIR filter, we now consider applying an ARMA filter to the graph signal $\bf x$.
In this section, we first introduce the ARMA graph filtering problem. Then, a centralized ARMA filter implementation is presented, and some issues related to the corresponding filter design problem are highlighted. Solutions to this ARMA filter design problem are presented in Section~\ref{sec.ARMAdes}.

\subsection{ARMA graph filter}

From~\cite{EA}, and similar to temporal ARMA filters~\cite{M}, an ARMA graph filter is characterized by a rational polynomial in the graph shift operator
\begin{equation}\label{eq:ARMAvertex}
  {\bf G} = g({\bf S} ; {\boldsymbol \theta}) = \left( \sum_{p=0}^P a_p {\bf S}^p \right)^{-1} \sum_{q=0}^Q b_q {\bf S}^q,
\end{equation}
where ${\boldsymbol \theta} = [ {\bf a}^T, {\bf b}^T ]^T$ with ${\bf a} = [ a_0, \dots, a_P]^T$ and ${\bf b}= [b_0, \dots, b_Q ]^T$ collecting the ARMA filter coefficients. This allows us to express the filter frequency response at frequency $\lambda_n$ as
\begin{equation}\label{eq:ARMAfunction}
\hat {g}_n = \frac{\sum\nolimits_{q=0}^Q b_q \lambda_n^q}{\sum\nolimits_{p=0}^P a_p \lambda_n^p }.
\end{equation}
Stable ARMA filters are obtained when $\sum_{p=0}^P a_p {\bf S}^p$ is invertible, or equivalently, when $\sum_{p=0}^P a_p \lambda_n^p$ is different from zero for all $n=1,2,\dots,N$.
This stability condition is less critical as in the time domain, which is mainly due to the fact that a graph signal is finite-length whereas a temporal signal is infinite-length. Hence, there is no big risk of the filter output growing unbounded.

Note that for simplicity reasons we define the ARMA filter coefficients ${\bf a}$ and ${\bf b}$ in an ambiguous way since multiplying both ${\bf a}$ and ${\bf b}$ with the same constant will not change the ARMA graph filter. Hence, whenever we design ${\bf a}$ and ${\bf b}$, we will remove this ambiguity by constraining the first AR coefficient to be one, i.e., $a_0=1$, which is rather standard.

\subsection{Implementation of ARMA graph filter}

From~\eqref{eq:ARMAvertex}, it is clear that the relation between the output ${\bf y}$ and the input ${\bf x}$ of an ARMA graph filter is given by
\begin{equation}\label{L2}
{ \left(  \sum\limits_{p = 0}^P {{a_p}{{\bf{S}}^p}} \right)}{\bf{y}} = \left( \sum\limits_{q = 0}^Q {{b_q}{{\bf{S}}^q}} \right) {\bf{x}}.
\end{equation}
Hence, by defining the matrices
\begin{equation}\label{eq:predPQ}
{\bf{P}} = \sum\limits_{p = 0}^P {{a_p}{{\bf{S}}^p}}, \quad \quad {\bf{Q}} = \sum\limits_{q = 0}^Q {{b_q}{{\bf{S}}^q}},
\end{equation}
we can express~\eqref{L2} in the compact form
\begin{equation}\label{L3}
{\bf{P}} {\bf{y}} = {\bf{Q}} {\bf{x}}.
\end{equation}
To compute the filter output ${\bf{y}}$ in \eqref{L3}, we can first calculate the right-hand side denoted for commodity as ${{\bf{z}}} = {\bf{Qx}}$ (which corresponds to pre-filtering $\bf{x}$ with an FIR filter) and then ${\bf{y}}$ is found by simply solving the linear system
\begin{equation}\label{L4}
{\bf{P}} {\bf{y}} = {{\bf{z}}}.
\end{equation}
Note that there are several efficient methods to solve \eqref{L4}, like first order methods \cite{PCVX}, the power method \cite{wilkinson1965algebraic}, and conjugate gradient (CG) \cite{SJA}. Their computational cost reduces significantly for sparse matrices ${\bf S}$, i.e., for sparse graphs \cite{M2}.

In this work we consider the CG method \cite{SJA} to implement ARMA graph filters in the vertex domain. As shown in Algorithm 1, the CG approach has a computational complexity that scales linearly in the number of edges $E$. Specifically, we first need to compute $\bf{z}= {\bf{Q}}\bf{x}$, which by following the efficient implementation \cite{DPP} requires $Q$ multiplications with the shift operator ${\bf S}$ since the terms can be computed as ${\bf S}^k{\bf x} = {\bf S}({\bf S}^{k-1}{\bf x})$ leading to an overall complexity of $O(QE)$. Then, in each iteration $i$ of the CG it is required to compute the term ${\bf P}{\bf d}^{(i)}$, which if computed in the same way as $\bf{z}$ requires a computational effort of order $O(PE)$. Thus, if considering that the CG is arrested after $T$ iterations, the overall implementation cost of the ARMA graph filter is of order $O((PT + Q)E)$. We would like to highlight that the ARMA filter output with CG is computed without explicitly building the matrices ${\bf P}$ and ${\bf Q}$, and only considering their application to a specific vector.

\begin{table}[!t]\renewcommand{\arraystretch}{1.2}
\centering
\begin{tabular}{rrl}
     \hline
     \hline
      \multicolumn{3}{l}{ $\bf Algorithm 1$: Conjugate gradient} \\
       \hline
       1&$\bf Input$: & ${{{\bf{y}}^{(0)}}}$, ${{\bf{x}}}$, coefficients ${a_p}$, $ {b_q}$ \\
       2& & accuracy ${\varepsilon}$, number of iterations ${T}$\\
       3&$\bf Initialization$: & ${{\bf{z}}}$, ${\bf{P}}{{\bf{y}}^{(0)}}$ (using ${{\bf{S}}^k}{{\bf{y}}^{(0)}} = {\bf{S}}({{\bf{S}}^{k - 1}}{{\bf{y}}^{(0)}})$) \\
       4& & ${{\bf{d}}^{(0)}} = {{\bf{r}}^{(0)}} = {{\bf{z}}} - {\bf{P}}{{\bf{y}}^{(0)}}$,\\
       5& & ${\delta ^{(0)}} = {\delta ^{new}} = {\bf{r}}^{(0)T}{{\bf{r}}^{(0)}}$ \\
       6&$\bf Iteration$: & while $i < T$ and ${\delta ^{new}} > {\varepsilon ^2}{\delta ^{(0)}}$\\
       7& & ${\omega ^{(i)}} = \frac{{{\delta _{new}}}}{{{\bf{d}}^{(i)T}{\bf{P}}{{\bf{d}}^{(i)}}}}$ \\
       8& & ${{\bf{y}}^{(i + 1)}} = {{\bf{y}}^{(i)}} + {\omega^{(i)}}{{\bf{d}}^{(i)}}$, \\
       9& & ${{\bf{r}}^{(i + 1)}} = {{\bf{r}}^{(i)}} - {\omega^{(i)}}{\bf{P}}{{\bf{d}}^{(i)}}$ \\
       10& & ${\delta ^{old}} = {\delta ^{new}}$, ${\delta ^{new}} = {\bf{r}}^{(i + 1)T}{{\bf{r}}^{(i + 1)}}$\\
       11& & ${\varphi^{(i + 1)}} = \frac{{{\delta ^{new}}}}{{{\delta ^{old}}}}$, ${{\bf{d}}^{(i + 1)}} = {{\bf{r}}^{(i + 1)}} + {\varphi^{(i + 1)}}{{\bf{d}}^{(i)}}$ \\
       12& & $i=i+1$ \\
       13&$\bf Output$: & ${{\bf{y}}^{(i + 1)}}$ \\
       \hline
       \hline
   \end{tabular}
\end{table}

In Section \uppercase\expandafter{\romannumeral5}, we analyze the tradeoff between the computational implementation cost and approximation accuracy induced by the CG approach.

\section{ARMA graph filter design}
\label{sec.ARMAdes}
This section contains the proposed ARMA filter design methods. We start with a discussion of the ARMA design problem, followed by two approaches inspired by Prony's method, and finally an iterative approach.

\subsection{ARMA design problem}
\label{subsec.ARMAdesign}

As discussed in Subsection~\ref{sec:filtering}, we would like to find the ARMA filter coefficients ${\bf a}$ and ${\bf b}$ such that a desired frequency response $\hat{h}_n$ is matched, where the latter can be a desired filter shape (for filter design, smoothing, or denoising) or the GFT of a graph signal (for compression or prediction). In this context, note that many desired responses $\hat{h}_n$ already have the shape of an ARMA filter, e.g., for Tikhonov denoising or interpolation, which means no explicit fitting is required in that case.

More specifically, adapting~\eqref{eq:FilG1} to our ARMA filter design problem, we want to minimize the following error
\begin{equation}\label{eq:optimal}
e_n = {{\hat h}_n}- \frac{{\sum\nolimits_{q = 0}^Q {{b_q}\lambda _n^q} }}{{\sum\nolimits_{p = 0}^P {{a_p}\lambda _n^p} }}.
 \end{equation}
Since~\eqref{eq:optimal} is nonlinear in ${\bf a}$ and ${\bf b}$, classical approaches like Prony's method~\cite{M} consider minimizing the following modified error
\begin{equation}\label{eq:prony}
 e'_n = {{\hat h}_n}\left( {\sum\limits_{p = 0}^P {{a_p}} \lambda _n^p} \right) - \sum\limits_{q = 0}^Q {{b_q}} \lambda _n^q.
 \end{equation}
 The latter is clearly not equivalent to~\eqref{eq:optimal} but it is linear in ${\bf a}$ and ${\bf b}$.

In the sequel, our goal will be to find ${\bf a}$ and ${\bf b}$ that minimize~\eqref{eq:optimal} or~\eqref{eq:prony} in the mean square sense, subject to~$a_0=1$ as mentioned before. Similar to the FIR filter, if we want the ARMA filter to make sense as a graph filter that will be applied to a real-valued graph signal ${\bf x}$, we want the ARMA filter coefficients ${\bf a}$ and ${\bf b}$ to be real-valued. We will show that this is the case for the different proposed approaches. Finally, note that, similar to Prony's method~\cite{M}, the non-convex stability constraint $\sum\nolimits_{p = 0}^P {{a_p}} \lambda _n^p \ne 0$ will be ignored in the rest of the paper, but it can easily be checked after the design.

\subsection{Methods inspired by Prony}

\textbf{Prony's LS.} To start, let us first stack $e_n$ from~\eqref{eq:optimal} in the vector ${\bf e}=[e_1, \dots, e_N]^{\rm T}$, which can be expressed as
\begin{equation}\label{eq:original_error}
{\bf{e}} = \hat{\bf h} - \text{diag}( {\bf \Psi}_{P+1} {\bf a} )^{-1} {\bf \Psi}_{Q+1} {\bf b} .
\end{equation}
As we mentioned before, this nonlinear function is hard to handle and thus we focus on the modified error.
Stacking $e'_n$ from~\eqref{eq:prony} in the vector ${\bf e}'=[e'_1, \dots, e'_N]^{\rm T}$, we obtain the simpler linear expression
\begin{align}\label{eq:other}
{{\bf{e}}' } & = \hat{\bf h} \circ ({{\bf{\Psi }}_{P + 1}}{\bf{a}}) - {{\bf{\Psi }}_{Q + 1}}{\bf{b}} \\ & = [{{\bf{\Psi }}_{P + 1}} \circ ( \hat{\bf h} {\bf 1}_{P+1}^{\rm{T}} )]{\bf{a}} - {{\bf{\Psi }}_{Q + 1}}{\bf{b}},
\end{align}
where ``$\circ$'' represents the element-wise Hadamard product and ${\bf 1}_{P+1}$ is the $(P+1) \times 1$ all-one vector.

Minimizing $\| {\bf e}' \|^2$ over ${\bf a}$ and ${\bf b}$ leads to the following LLS problem
\begin{equation}\label{eq:LSsolution}
\mathop {\min }\limits_{{\bf{a}},{\bf{b}}} {\left\| [ {{\bf{\Psi }}_{P + 1}} \circ ({\hat {\bf h}}{{\bf 1}_{P + 1}^{\rm{T}}), - {{\bf{\Psi }}_{Q + 1}}} ] \begin{bmatrix}
{\bf{a}} \\
{\bf{b}}
\end{bmatrix} \right\|^2}, \ \text{s.t.} \ {a_0} = 1,
\end{equation}
which can be solved efficiently. The next Proposition shows that the obtained ${\bf a}$ and ${\bf b}$ vectors are real-valued.
\begin{proposition}\label{th:2}
Under Property~\ref{prop:design}, the ARMA filter coefficients ${\bf a}$ and ${\bf b}$ obtained by solving~\eqref{eq:LSsolution} are real-valued.
\end{proposition}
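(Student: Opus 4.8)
The plan is to mirror the argument behind Proposition~\ref{th:1} (Appendix~A), exploiting the conjugate symmetry that the graph frequencies and Property~\ref{prop:design} impose on the system matrix of~\eqref{eq:LSsolution}. First I would collect that matrix as ${\bf M} = [\,{\bf \Psi}_{P+1} \circ (\hat{\bf h}{\bf 1}_{P+1}^{\rm T}),\, -{\bf \Psi}_{Q+1}\,]$, so that~\eqref{eq:LSsolution} reads $\min \|{\bf M}{\boldsymbol \theta}\|^2$ with ${\boldsymbol \theta} = [{\bf a}^{\rm T},{\bf b}^{\rm T}]^{\rm T}$ and $a_0=1$. Its $n$th row is $[\,\hat h_n,\hat h_n\lambda_n,\dots,\hat h_n\lambda_n^P,\,-1,-\lambda_n,\dots,-\lambda_n^Q\,]$. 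The structural fact I would establish is that these rows are either real or come in complex-conjugate pairs: if $\lambda_n$ is real then Property~\ref{prop:design} forces $\hat h_n$ real and the whole row is real; if $\lambda_n,\lambda_{n'}$ form a conjugate pair then $\lambda_{n'}=\lambda_n^*$ and, again by Property~\ref{prop:design}, $\hat h_{n'}=\hat h_n^*$, so the $n'$th row is the entrywise conjugate of the $n$th row.

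From this I would deduce the central fact that ${\bf M}^{\rm H}{\bf M}$ is real, hence (being Hermitian) real symmetric. Writing $[{\bf M}^{\rm H}{\bf M}]_{jk}=\sum_n \overline{M_{nj}}M_{nk}$, each real row contributes the real number $M_{nj}M_{nk}$, while each conjugate pair contributes $\overline{M_{nj}}M_{nk}+M_{nj}\overline{M_{nk}}=2\,\mathrm{Re}(\overline{M_{nj}}M_{nk})$, again real. Partitioning ${\bf M}=[\,{\bf m}_0,\,{\bf M}_{\rm r}\,]$ into the first column ${\bf m}_0$ (which multiplies $a_0$) and the remaining columns ${\bf M}_{\rm r}$, the same accounting shows that the Gram matrix ${\bf M}_{\rm r}^{\rm H}{\bf M}_{\rm r}$ and the cross term ${\bf M}_{\rm r}^{\rm H}{\bf m}_0$ are real, since they are just a submatrix and a subvector of ${\bf M}^{\rm H}{\bf M}$.

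Next I would fold in the constraint. Setting $a_0=1$ turns~\eqref{eq:LSsolution} into the unconstrained least squares problem $\min_{{\boldsymbol \theta}_{\rm r}}\|{\bf m}_0+{\bf M}_{\rm r}{\boldsymbol \theta}_{\rm r}\|^2$ with ${\boldsymbol \theta}_{\rm r}=[a_1,\dots,a_P,b_0,\dots,b_Q]^{\rm T}$, whose minimum-norm solution is ${\boldsymbol \theta}_{\rm r}=-{\bf M}_{\rm r}^{\dag}{\bf m}_0=-({\bf M}_{\rm r}^{\rm H}{\bf M}_{\rm r})^{\dag}{\bf M}_{\rm r}^{\rm H}{\bf m}_0$. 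Because both ${\bf M}_{\rm r}^{\rm H}{\bf M}_{\rm r}$ and ${\bf M}_{\rm r}^{\rm H}{\bf m}_0$ are real and the pseudo-inverse of a real matrix is real, ${\boldsymbol \theta}_{\rm r}$ is real; appending the real value $a_0=1$ then gives real-valued ${\bf a}$ and ${\bf b}$, as claimed.

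I expect the main obstacle to be a matter of care rather than depth: making the conjugate-pair bookkeeping watertight (counting each real row once and each pair once), and confirming that the real symmetry of ${\bf M}^{\rm H}{\bf M}$ descends to ${\bf M}_{\rm r}^{\rm H}{\bf M}_{\rm r}$ and ${\bf M}_{\rm r}^{\rm H}{\bf m}_0$ after the constraint is absorbed. The only subtle point is the rank-deficient case, where I would invoke the identity ${\bf M}_{\rm r}^{\dag}=({\bf M}_{\rm r}^{\rm H}{\bf M}_{\rm r})^{\dag}{\bf M}_{\rm r}^{\rm H}$ to keep the minimum-norm solution expressed through real quantities; otherwise the argument is the same real/imaginary separation used in Appendix~A.
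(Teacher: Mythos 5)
Your proof is correct and takes essentially the same route as the paper's, which proves Proposition~\ref{th:2} only by pointing to the Appendix~A argument for Proposition~\ref{th:1}: group the frequencies into real ones and complex-conjugate pairs so that, under Property~\ref{prop:design}, the rows of the system matrix are real or conjugate-paired, making the normal-equation quantities real and hence the solution real --- exactly your observation that ${\bf M}^{\rm H}{\bf M}$ is real. Your explicit elimination of the constraint $a_0=1$ and the identity ${\bf M}_{\rm r}^{\dag}=({\bf M}_{\rm r}^{\rm H}{\bf M}_{\rm r})^{\dag}{\bf M}_{\rm r}^{\rm H}$ simply fill in (correctly, and a bit more carefully in the rank-deficient case) the adaptation that the paper leaves implicit in its ``similar to Proposition~\ref{th:1}'' remark.
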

\begin{proof}
The proof is similar to the proof of Proposition~\ref{th:1}.
\end{proof}

\textbf{Prony's projection.} Since Prony's LS approach addresses the modified error~\eqref{eq:prony} and not the desired error~\eqref{eq:optimal}, we here consider a way to partly overcome this limitation, and potentially improve the approximation accuracy of~\eqref{eq:LSsolution}. We use the orthogonal subspace projection approach~\cite{CCH} to rephrase~\eqref{eq:other} as a function of only the denominator coefficients ${\bf a}$. Then, with the obtained solution for ${\bf a}$, the original error~\eqref{eq:optimal} can be minimized to find the numerator coefficients ${\bf b}$. This approach can be interpreted as Shanks' method similar to that used in~\cite{EA}.

Let us start by considering the orthogonal projection matrix onto the orthogonal complement of the range of ${{\bf \Psi}_{Q+1}}$
\begin{equation}\label{eq:PM}
{\bf P}^{\perp}_{{\bf \Psi}_{Q+1}} = {\bf I}_N - {\bf \Psi}_{Q+1} {\bf \Psi}_{Q+1}^\dagger,
\end{equation}
where ${{\bf \Psi}_{Q+1}}$ is better conditioned than ${{\bf \Psi}_{K+1}}$ used to design an FIR graph filter, because $Q < K$ and removing columns from a tall matrix improves its condition number. Then, the modified error~\eqref{eq:other} can be reshaped as
\begin{equation}\label{eq:OSPerror}
{{\bf e}''} = {\bf{P}}_{{{\bf{\Psi }}_{Q + 1}}}^ \bot [{{\bf{\Psi }}_{P + 1}} \circ ( \hat{\bf h} {\bf 1}_{P+1}^{\rm{T}} )]{\bf{a}} - {\bf{P}}_{{{\bf{\Psi }}_{Q + 1}}}^ \bot {{\bf{\Psi }}_{Q + 1}}{\bf{b}},
\end{equation}
where the second term on the right hand side of~\eqref{eq:OSPerror} is zero. As shown in~\cite{hu2017unified},~\cite{CCH}, this projection operator preserves the solution for ${\bf a}$ when minimizing~\eqref{eq:OSPerror} instead of~\eqref{eq:other}. Hence, after the projection, the LLS problem for solving ${\bf a}$ becomes
\begin{equation}\label{eq:LS5}
\min_{{\bf a}} \|  {\bf P}^{\perp}_{{\bf \Psi}_{Q+1}} [ {\bf \Psi}_{P+1} \circ ( \hat{\bf h} {\bf 1}_{P+1}^{\rm T} ) ] {\bf a} \|^2, \
\text{s.t.} \ {a_0} = 1.
\end{equation}
The reason why we prefer solving~\eqref{eq:LS5} over~\eqref{eq:LSsolution} for finding a solution for ${\bf a}$ is the computational complexity. Finally, the vector ${\bf b}$ can be obtained using~\eqref{eq:optimal} after plugging in the solution for ${\bf a}$ obtained from~\eqref{eq:LS5}. In other words, ${\bf b}$ is found by solving
\begin{equation}\label{eq:LS6}
\min_{{\bf b}} \|  \hat{\bf h} - \text{diag}( {\bf \Psi}_{P+1} {\bf a} )^{-1}  {\bf \Psi}_{Q+1} {\bf b} \|^2.
\end{equation}
As before, we can again show that this solution for ${\bf a}$ and ${\bf b}$ is real-valued.
\begin{proposition}\label{th:3}
Under Property~\ref{prop:design}, the ARMA filter coefficients ${\bf a}$ and ${\bf b}$ obtained by solving~\eqref{eq:LS5} and~\eqref{eq:LS6} are real-valued.
\end{proposition}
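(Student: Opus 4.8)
The plan is to reuse the conjugate-permutation symmetry argument behind Proposition~\ref{th:1} (Appendix~A) and Proposition~\ref{th:2}, now applied in two stages matching the two least-squares problems~\eqref{eq:LS5} and~\eqref{eq:LS6}. Let ${\bf \Pi}$ be the permutation matrix that swaps the row indices $n$ and $n'$ of every complex-conjugate pair of graph frequencies and fixes the real-valued ones; it satisfies ${\bf \Pi} = {\bf \Pi}^{\rm T} = {\bf \Pi}^{-1}$, hence ${\bf \Pi}^2 = {\bf I}_N$. Two facts drive the argument: (i) from the Vandermonde structure $[{\bf \Psi}_{K+1}]_{n,k} = \lambda_n^{k-1}$ together with $\lambda_{n'} = \lambda_n^*$ we get ${\bf \Psi}_{K+1}^* = {\bf \Pi}{\bf \Psi}_{K+1}$ for every order $K$; and (ii) by Property~\ref{prop:design}, $\hat{\bf h}^* = {\bf \Pi}\hat{\bf h}$. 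I would also record the pseudo-inverse identity $({\bf \Pi}{\bf X})^\dagger = {\bf X}^\dagger {\bf \Pi}$, valid because ${\bf \Pi}$ is orthogonal, and the general fact that conjugation commutes with the pseudo-inverse, $({\bf X}^\dagger)^* = ({\bf X}^*)^\dagger$.

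First I would treat~\eqref{eq:LS5} and show ${\bf a}$ is real. Writing ${\bf M} = {\bf \Psi}_{P+1} \circ (\hat{\bf h}{\bf 1}_{P+1}^{\rm T})$, so $[{\bf M}]_{n,k} = \lambda_n^{k-1}\hat{h}_n$, facts (i) and (ii) give ${\bf M}^* = {\bf \Pi}{\bf M}$. For the projector ${\bf P}^\perp_{{\bf \Psi}_{Q+1}} = {\bf I}_N - {\bf \Psi}_{Q+1}{\bf \Psi}_{Q+1}^\dagger$, using ${\bf \Psi}_{Q+1}^* = {\bf \Pi}{\bf \Psi}_{Q+1}$, the commuting of conjugation with the pseudo-inverse, and ${\bf \Pi}^2 = {\bf I}_N$, I would verify $({\bf P}^\perp_{{\bf \Psi}_{Q+1}})^* = {\bf \Pi}{\bf P}^\perp_{{\bf \Psi}_{Q+1}}{\bf \Pi}$. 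Hence the objective matrix ${\bf N} = {\bf P}^\perp_{{\bf \Psi}_{Q+1}}{\bf M}$ of~\eqref{eq:LS5} inherits ${\bf N}^* = {\bf \Pi}{\bf N}$. Eliminating $a_0=1$ by partitioning ${\bf a} = [1,\tilde{\bf a}^{\rm T}]^{\rm T}$ and ${\bf N} = [{\bf n}_0, \tilde{\bf N}]$ turns~\eqref{eq:LS5} into $\min_{\tilde{\bf a}}\|{\bf n}_0 + \tilde{\bf N}\tilde{\bf a}\|^2$ with solution $\tilde{\bf a} = -\tilde{\bf N}^\dagger {\bf n}_0$. Since ${\bf \Pi}$ acts only on rows, the symmetry descends to ${\bf n}_0^* = {\bf \Pi}{\bf n}_0$ and $\tilde{\bf N}^* = {\bf \Pi}\tilde{\bf N}$; then $\tilde{\bf a}^* = -(\tilde{\bf N}^*)^\dagger{\bf n}_0^* = -({\bf \Pi}\tilde{\bf N})^\dagger{\bf \Pi}{\bf n}_0 = -\tilde{\bf N}^\dagger{\bf n}_0 = \tilde{\bf a}$, so ${\bf a}$ is real-valued.

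Next I would treat~\eqref{eq:LS6} with this real ${\bf a}$ fixed. Setting ${\bf D} = \text{diag}({\bf \Psi}_{P+1}{\bf a})$, its $n$th diagonal entry $\sum_p a_p\lambda_n^p$ satisfies, since ${\bf a}$ is real and $\lambda_{n'} = \lambda_n^*$, the relation ${\bf D}^* = {\bf \Pi}{\bf D}{\bf \Pi}$, and therefore $({\bf D}^{-1})^* = {\bf \Pi}{\bf D}^{-1}{\bf \Pi}$. Defining ${\bf F} = {\bf D}^{-1}{\bf \Psi}_{Q+1}$ and using ${\bf \Psi}_{Q+1}^* = {\bf \Pi}{\bf \Psi}_{Q+1}$ with ${\bf \Pi}^2 = {\bf I}_N$ gives ${\bf F}^* = {\bf \Pi}{\bf F}$. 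Problem~\eqref{eq:LS6} is the unconstrained least-squares problem $\min_{\bf b}\|\hat{\bf h} - {\bf F}{\bf b}\|^2$ with solution ${\bf b} = {\bf F}^\dagger\hat{\bf h}$, and the same computation as before---${\bf b}^* = ({\bf F}^*)^\dagger\hat{\bf h}^* = ({\bf \Pi}{\bf F})^\dagger{\bf \Pi}\hat{\bf h} = {\bf F}^\dagger\hat{\bf h} = {\bf b}$---shows ${\bf b}$ is real-valued, completing the proof.

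The main obstacle I anticipate is the projector step: establishing $({\bf P}^\perp_{{\bf \Psi}_{Q+1}})^* = {\bf \Pi}{\bf P}^\perp_{{\bf \Psi}_{Q+1}}{\bf \Pi}$ cleanly, which hinges on the orthogonality of ${\bf \Pi}$ and on the identity $({\bf \Pi}{\bf X})^\dagger = {\bf X}^\dagger{\bf \Pi}$. This identity, together with the reduction of the $a_0=1$ constraint, is the only ingredient not already present in the single-stage Propositions~\ref{th:1}--\ref{th:2}, so it deserves the most care; everything else is a direct re-run of the conjugate-symmetry bookkeeping, with the well-conditioning of the Vandermonde blocks (already invoked in the FIR design) ensuring the pseudo-inverses behave as needed.
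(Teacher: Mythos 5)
Your proof is correct, but it takes a different route than the paper. The paper's argument (Appendix~A, invoked for Propositions~\ref{th:2}--\ref{th:4} via ``the proof is similar'') partitions the frequencies into conjugate-pair blocks and real-valued blocks, writes out the normal-equation solution explicitly in those blocks, and observes that terms of the form ${\bf \Psi}_1^H{\bf \Psi}_1 + {\bf \Psi}_1^T{\bf \Psi}_1^*$ are manifestly real, so the solution is exhibited as an explicitly real formula. You instead encode the conjugate symmetry as a fixed-point property under a permutation operator ${\bf \Pi}$, and show the minimizer equals its own conjugate by propagating the relations ${\bf \Psi}^* = {\bf \Pi}{\bf \Psi}$ and $\hat{\bf h}^* = {\bf \Pi}\hat{\bf h}$ through products, Hadamard products, the projector, and pseudo-inverses via the identities $({\bf \Pi}{\bf X})^\dagger = {\bf X}^\dagger{\bf \Pi}$ and $({\bf X}^*)^\dagger = ({\bf X}^\dagger)^*$ (both of which you correctly justify by ${\bf \Pi}$ being a symmetric orthogonal involution). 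Your formulation buys compositionality: the symmetry of ${\bf P}^\perp_{{\bf \Psi}_{Q+1}}$, the elimination of the $a_0=1$ constraint, the diagonal weighting $\text{diag}({\bf \Psi}_{P+1}{\bf a})^{-1}$ in \eqref{eq:LS6}, and the two-stage structure of Proposition~\ref{th:3} are all handled mechanically, which is precisely where the paper's ``similar to Proposition~\ref{th:1}'' glosses over nontrivial details (the projector and the constraint do not appear in the FIR problem at all). The paper's block computation, on the other hand, is more elementary and self-contained, requiring no pseudo-inverse lemmas. One small caveat in both your argument and the paper's: when the least-squares problems have non-unique minimizers, the realness claim strictly applies to the canonical (pseudo-inverse, minimum-norm) solution; and your stage-two step tacitly assumes $\text{diag}({\bf \Psi}_{P+1}{\bf a})$ is invertible, which the paper likewise assumes by ignoring the stability constraint $\sum_p a_p\lambda_n^p \neq 0$.
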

\begin{proof}
The proof is similar to the proof of Proposition~\ref{th:1}.
\end{proof}

We would like to remark that this version of Prony's projection approach has a conceptual difference with the method presented in \cite{EA}. While in \cite{EA} the desired frequency response is first fitted  with an FIR filter and then the denominator coefficients are found to match that response, we here aim at approaching directly the desired response rather than its FIR approximation. In parallel to the classical literature \cite{M}, our approach can be considered as a reshaping of the Pad\'e approximation which first is solved for the denominator coefficients ${\bf a}$ and then for the numerator coefficients ${\bf b}$. As we show in Section \uppercase\expandafter{\romannumeral5}, the Prony's projection approach improves in general the approximation accuracy of \eqref{eq:LSsolution}.

\subsection{Iterative approach}

In this section, we present the iterative approach to design the ARMA coefficients. The idea consists of updating recursively the filter coefficients, while minimizing the original error~\eqref{eq:optimal}. We first reformulate the problem to make it amenable to our iterative approach and then use a variant of the Steigliz-McBride method~\cite{M} to implement an iterative algorithm that can be utilized for finding the ARMA graph filter coefficients.

\textbf{Problem reformulation.}The focus in the previous section was on solving \eqref{eq:prony}. This of course comes with a lack of optimality, since our aim is to solve \eqref{eq:optimal}. In the iterative approach, instead, we focus directly on minimizing \eqref{eq:optimal}.

To ease the notation, let us define
\[\beta _n = \sum\limits_{q = 0}^Q {{b_q}\lambda _n^q}\quad \text{and}\quad \alpha _n = \sum\limits_{p = 0}^P {{a_p}\lambda _n^p}, \]
and rewrite the original error~\eqref{eq:optimal} as
\begin{equation}\label{eq:oneerror}
e_n = {{{\hat h}_n} - \frac{{\beta _n}}{{\alpha _n}}}.
\end{equation}
Then, by defining ${\gamma _n = {1 \mathord{\left/
 {\vphantom {1 {\alpha _n}}} \right.
 \kern-\nulldelimiterspace} {\alpha _n}}}$, we have
\begin{equation}\label{eq:one1error}
e_n = {{{\hat h}_n}- \beta _n\gamma _n},
\end{equation}
which can be equivalently expressed as
\begin{equation}\label{eq:optimal2}
e_n =  {({{\hat h}_n}\alpha _n - \beta _n)\gamma _n}.
\end{equation}
Note that the expression \eqref{eq:optimal2} is linear in $\alpha _n $, $\beta _n$ and $\gamma _n$, if each of them is treated as a separate variable. To avoid inversion issues when $\alpha_n = 0$, we can consider $\gamma_n = 1/(\alpha_n + \rho)$ for some $\rho \approx 0$. Note that if $\gamma _n$ is fixed, $e_n $ becomes linear in the variables $\alpha _n$ and $\beta _n$. This will be our starting point to minimize $e_n$ recursively. In each iteration, having found a new set of solutions for $\alpha _n$, $\beta _n$ we can then find $a_p$ and $b_q$ as well as update $\gamma _n$.

To follow the convention of the previous sections, we write \eqref{eq:optimal2} in a more convenient vector form, by defining the vectors ${\boldsymbol{\alpha}} = [{\alpha _1}, \dots , {\alpha _N }]^{\rm T}$, ${\boldsymbol{\beta}}  = [{\beta _1}, \dots , {\beta _N }]^{\rm T}$, and ${\boldsymbol{\gamma}}  = [{\gamma _1}, \dots , {\gamma _N }]^{\rm T}$. Then, the error vector ${\bf{e}} = [{e_1}, \dots , {e_N }]^{\rm T}$ containing the original error for all graph frequencies can be written as
\begin{equation}\label{eq:merror}
{\bf{e}} =  {[{\hat{\bf h}} \circ {\boldsymbol{\alpha }} - {\boldsymbol{\beta }}] \circ {\boldsymbol{\gamma }}}.
\end{equation}

\textbf{Iterative algorithm} Let ${\boldsymbol{\alpha}}^{(i)}$ and ${\boldsymbol{\beta}}^{(i)}$ respectively denote the estimates of the vectors ${\boldsymbol{\alpha}}$ and ${\boldsymbol{\beta}}$, at the $i$-th iteration. We can then find the value of ${\boldsymbol{\gamma }}$ as an element-wise inversion of ${\boldsymbol{\alpha}}^{(i)}$, which we label as ${\boldsymbol{\gamma }}^{(i)}$,
\begin{equation}\label{eq:update}
{{\boldsymbol{\gamma }}^{(i)}} = {\left[ {\begin{array}{*{20}{c}}
{\frac{1}{{\alpha _1^{(i)} + \rho }}}&{\frac{1}{{\alpha _n^{(i)} + \rho }}}& \cdots &{\frac{1}{{\alpha _N^{(i)} + \rho }}}
\end{array}} \right]^T}.
\end{equation}
Using this value for ${\boldsymbol \gamma}$, we obtain the updated error
\begin{equation}\label{eq:iteration2}
{\bf{e}}^{(i+1)} = ({\hat{\bf h}}  \circ {{\boldsymbol{\alpha }}}) \circ {{{\boldsymbol{\gamma }}^{(i)}} - {{\boldsymbol{\beta }}} \circ {{\boldsymbol{\gamma }}^{(i)}}},
\end{equation}
 which is linear in the unknown variables ${{\boldsymbol{\alpha }}}$ and ${{\boldsymbol{\beta }}} $. Minimizing this error leads to the updated values ${{\boldsymbol{\alpha }}}^{(i+1)}$ and ${{\boldsymbol{\beta }}}^{(i+1)} $. This procedure is then repeated till a desirable solution is obtained.

To formalize this iteration, and express it as a direct function of the true filter coefficients ${\bf a}$ and ${\bf b}$, we can reformulate~\eqref{eq:iteration2} as
\begin{equation}\label{eq:newIT2}
{{\bf{e}}^{(i + 1)}} = {{{\bf{H}}^{(i)}}{{\bf{a}}} - {{\bf{B}}^{(i)}}{{\bf{b}}}},
\end{equation}
where ${{{\bf{H}}^{(i)}}=({\boldsymbol{\gamma}^{(i)}} {\bf 1}_{P+1}^{\rm T}) \circ {{\bf{\Psi }}_{P + 1}} \circ ({\hat{\bf h}}{\bf 1}_{P+1}^{\rm T})}$ and ${{{\bf{B}}^{(i)}}=({\boldsymbol{\gamma}^{(i)}} {\bf 1}_{Q+1}^{\rm T}) \circ {{\bf{\Psi }}_{Q + 1}}}$. The specific derivations that lead to~\eqref{eq:newIT2} can be found in Appendix B.

With this in place, the filter coefficients at the $(i+1)$-th iteration are found by solving
\begin{equation}\label{eq:iteration3}
\begin{array}{l}
\mathop {\min }\limits_{{{\bf{a}}},{{\bf{b}}}} {\left\| \begin{bmatrix}
{{{\bf{H}}^{(i)}}}, { -{{\bf{B}}^{(i)}}} \end{bmatrix} \begin{bmatrix}
{{\bf{a}}}\\
{{\bf{b}}}
\end{bmatrix} \right\|^2} \ \text{s.t.} \  a_0 = 1.
\end{array}
\end{equation}
The solutions ${\bf a}^{(i+1)}$ and ${\bf b}^{(i+1)}$ are again real-valued as shown in the following Proposition.
\begin{proposition}\label{th:4}
Under Property~\ref{prop:design}, the ARMA filter coefficients ${\bf a}$ and ${\bf b}$ obtained by solving~\eqref{eq:iteration3} are real-valued.
\end{proposition}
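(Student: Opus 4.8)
The plan is to follow the same template as the proof of Proposition~\ref{th:1}, namely to show that the coefficient matrix $[\mathbf{H}^{(i)},-\mathbf{B}^{(i)}]$ appearing in~\eqref{eq:iteration3} has rows that are either purely real (for real graph frequencies) or occur in complex-conjugate pairs (for conjugate pairs of graph frequencies). Once this row structure is in place, the constrained LLS problem~\eqref{eq:iteration3} admits a real solution by exactly the argument used for~\eqref{eq:FIRproblem}. The one genuinely new ingredient, and the part I expect to be the crux, is that the design matrix here depends on $\boldsymbol{\gamma}^{(i)}$, i.e.\ on the \emph{previous} iterate through~\eqref{eq:update}; hence the claim cannot be established in one shot as in Propositions~\ref{th:1}--\ref{th:3} but must be carried through an induction on the iteration index $i$.

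For the induction the hypothesis is that $\mathbf{a}^{(i)}$ is real-valued, and I would first show that this propagates the conjugate symmetry to $\boldsymbol{\gamma}^{(i)}$. Since $\alpha_n^{(i)}=\sum_{p}a_p^{(i)}\lambda_n^p$, for a real $\lambda_n$ the scalar $\alpha_n^{(i)}$ is real, while for a conjugate pair $\lambda_{n'}=\lambda_n^{*}$ the reality of $\mathbf{a}^{(i)}$ gives $\alpha_{n'}^{(i)}=(\alpha_n^{(i)})^{*}$; as $\rho$ is real, inverting through~\eqref{eq:update} preserves this, so $\gamma_n^{(i)}$ is real when $\lambda_n$ is real and $\gamma_{n'}^{(i)}=(\gamma_n^{(i)})^{*}$ for a conjugate pair. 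Combining this with Property~\ref{prop:design} (which gives the same pattern for $\hat h_n$) and with the Vandermonde structure $[\mathbf{\Psi}]_{n,k}=\lambda_n^{k-1}$, the definitions below~\eqref{eq:newIT2} yield entrywise $[\mathbf{H}^{(i)}]_{n,k}=\gamma_n^{(i)}\hat h_n\lambda_n^{k-1}$ and $[\mathbf{B}^{(i)}]_{n,k}=\gamma_n^{(i)}\lambda_n^{k-1}$. Each factor is real for a real $\lambda_n$ and each factor conjugates under $n\mapsto n'$, so row $n'$ of $[\mathbf{H}^{(i)},-\mathbf{B}^{(i)}]$ is the complex conjugate of row $n$, and the rows tied to real frequencies are themselves real.

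With this row structure established, let $\mathbf{M}=[\mathbf{H}^{(i)},-\mathbf{B}^{(i)}]$. I would then observe that $\mathbf{M}^{\mathrm H}\mathbf{M}$ is real and symmetric: pairing each conjugate row pair in $[\mathbf{M}^{\mathrm H}\mathbf{M}]_{k,l}=\sum_n [\mathbf{M}]_{n,k}^{*}[\mathbf{M}]_{n,l}$ turns the two contributions into $2\,\mathrm{Re}([\mathbf{M}]_{n,k}^{*}[\mathbf{M}]_{n,l})$, while real-frequency rows contribute real terms. Since the constraint $a_0=1$ is real, eliminating it (write the stacked vector as $[1,\tilde{\mathbf v}^{\mathrm T}]^{\mathrm T}$ and split off the first column $\mathbf{m}_1$ of $\mathbf{M}$) reduces~\eqref{eq:iteration3} to $\min_{\tilde{\mathbf v}}\|\mathbf m_1+\tilde{\mathbf M}\tilde{\mathbf v}\|^2$, whose minimum-norm solution $\tilde{\mathbf v}=-(\tilde{\mathbf M}^{\mathrm H}\tilde{\mathbf M})^{\dagger}\tilde{\mathbf M}^{\mathrm H}\mathbf m_1$ involves only the real matrices $\tilde{\mathbf M}^{\mathrm H}\tilde{\mathbf M}$ and $\tilde{\mathbf M}^{\mathrm H}\mathbf m_1$ (the latter being real by the same conjugate-pairing argument). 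Hence $\mathbf a^{(i+1)}$ and $\mathbf b^{(i+1)}$ are real, closing the induction; the base case holds because the iteration is initialized with the real Prony solution of Proposition~\ref{th:2} or~\ref{th:3} (or with any real initialization).

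I expect the only delicate point to be the bookkeeping of the induction, i.e.\ verifying that the conjugate symmetry of $\boldsymbol{\gamma}^{(i)}$ is exactly what feeds the row structure of the \emph{next} design matrix; the reality argument for the constrained LLS itself is identical to that of Proposition~\ref{th:1}. An equivalent and perhaps cleaner way to finish is to note that $\mathbf{M}^{*}=\boldsymbol{\Pi}\mathbf{M}$ for the permutation $\boldsymbol{\Pi}$ that swaps conjugate-paired rows, so that any optimizer $\hat{\mathbf v}$ and its conjugate $\hat{\mathbf v}^{*}$ attain the same cost while both satisfying $a_0=1$; convexity of~\eqref{eq:iteration3} then makes $\tfrac12(\hat{\mathbf v}+\hat{\mathbf v}^{*})=\mathrm{Re}(\hat{\mathbf v})$ an optimizer, and uniqueness (full column rank of $\tilde{\mathbf M}$) forces the solution to be real.
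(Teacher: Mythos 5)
Your proof is correct and follows essentially the same route as the paper: the paper's proof of Proposition~\ref{th:4} is a one-line reference to the conjugate-pair/real-row argument of Proposition~\ref{th:1} (Appendix A), which is exactly the template you apply to the rows of $[\mathbf{H}^{(i)}, -\mathbf{B}^{(i)}]$ before concluding that the constrained LLS problem \eqref{eq:iteration3} has real normal equations and hence a real solution. The one ingredient you add --- the induction on the iteration index, needed because the conjugate symmetry of $\mathbf{H}^{(i)}$ and $\mathbf{B}^{(i)}$ rests on $\boldsymbol{\gamma}^{(i)}$ inheriting that symmetry through \eqref{eq:update} from a \emph{real} previous iterate $\mathbf{a}^{(i)}$, with the base case supplied by the real Prony initialization of Proposition~\ref{th:2} or~\ref{th:3} --- is left implicit in the paper's terse proof, so your write-up is, if anything, more complete than the original.
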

\begin{proof}
The proof is similar to the proof of Proposition~\ref{th:1}.
\end{proof}
For the above two design methods, the design cost of Prony's method is related to the LLS solution which requires $O((P+Q+1)^2N)$ operations, while for the iterative approach, the total design cost is $\tau $ times leading to a cost of $O(\tau(P+Q+1)^2N)$. Since the number of nodes $N$ is much smaller than the number of edges $E$, the design cost is smaller than the implementation cost. Algorithm 2 summarizes the iterative approach.

\begin{table}[!t]\renewcommand{\arraystretch}{1.2}
\centering
\begin{tabular}{rrl}
     \hline
     \hline
      \multicolumn{3}{l}{ $\bf Algorithm 2$: Iterative approach} \\
       \hline
       1&$\bf Input$: & ${{\bf{a}}^{(0)}}$, ${\hat{\bf h}}$, number of iterations ${\tau}$, threshold $\delta_c$\\
       2&$\bf Initialization$: & ${\boldsymbol{\gamma }^{(0)}}, {{\bf{H}}^{(0)}}, {{\bf{B}}^{(0)}} $, ${{{\hat{\bf g}}}^{(0)}}$, ${{\bf{e}} ^{(0)}}$\\
       3&$\bf Iteration $ : & while $i < {\tau}$ and ${\delta } < {\delta _{c}}$\\
       4& &  solve $\begin{array}{l}
\mathop {\min }\limits_{{{\bf{a}}},{{\bf{b}}}} {\left\| \begin{bmatrix}
{{{\bf{H}}^{(i)}}}, { -{{\bf{B}}^{(i)}}} \end{bmatrix} \begin{bmatrix}
{{\bf{a}}}\\
{{\bf{b}}}
\end{bmatrix} \right\|^2} \ \text{s.t.} \  a_0 = 1.
\end{array}$ \\
        5&&  return ${{\bf{a}}^{(i+1)}}$, ${{\bf{b}}^{(i+1)}} $ \\
        6&&  compute ${{{\hat{\bf g}}}^{(i+1)}}$, ${{\bf{e}} ^{(i+1)}}$, $\delta  = \| {{{\bf{e}} ^{(i + 1)}} - {{\bf{e}}^{(i)}}} \|$ \\
        7&&  update ${\boldsymbol{\gamma }^{(i+1)}}$\\
        8&&  $i=i+1$\\
        9&$\bf Output$: & ${{\bf{a}}^{(i+1)}}$, ${{\bf{b}}^{(i+1)}}$\\
       \hline
       \hline
   \end{tabular}
\end{table}

\textbf{ Remark 1.} We stop the iterations when $\delta$, representing the error difference between two successive iterations, is smaller than a given threshold $\delta_c$. However, depending on the specific combination of $P$ and $Q$, the method does not always converge fast enough or it does not converge at all. For those cases, we consider a maximum number of iterations $\tau$ and search for the minimum error over all iterations. We then assume that this iteration provides the solution to the problem. As we will see in the numerical section, for a fixed order $K$, the best performance for $P + Q \le K$ always leads to a significant improvement in approximation accuracy over the former methods.
However, for a fixed order $K$, some combinations of $P, Q$ yield instabilities around the cut-off frequency. The latter is especially present in Prony's method. Therefore, a search over different combinations of $P,Q$ is recommended.

\textbf{ Remark 2.} For ${{\boldsymbol{\gamma}^{(0)}} = {\bf{1}}}$, the LLS procedure \eqref{eq:LSsolution} can be seen as a special case of the iterative approach. With ${{\boldsymbol{\gamma}^{(0)}} = {\bf{1}}}$, the formulation of the iterative approach degenerates into the LLS solution, and the approximation error changes from the original error \eqref{eq:optimal} to the modified error \eqref{eq:prony}. However, since Prony's projection approach leads to better results that Prony's LS approach, we prefer the latter to initialize the iterative approach.

\begin{figure*}[!t]
\centering
\subfloat[]{\includegraphics[trim={1.5cm 0 2cm 0},clip,width=.48\textwidth]{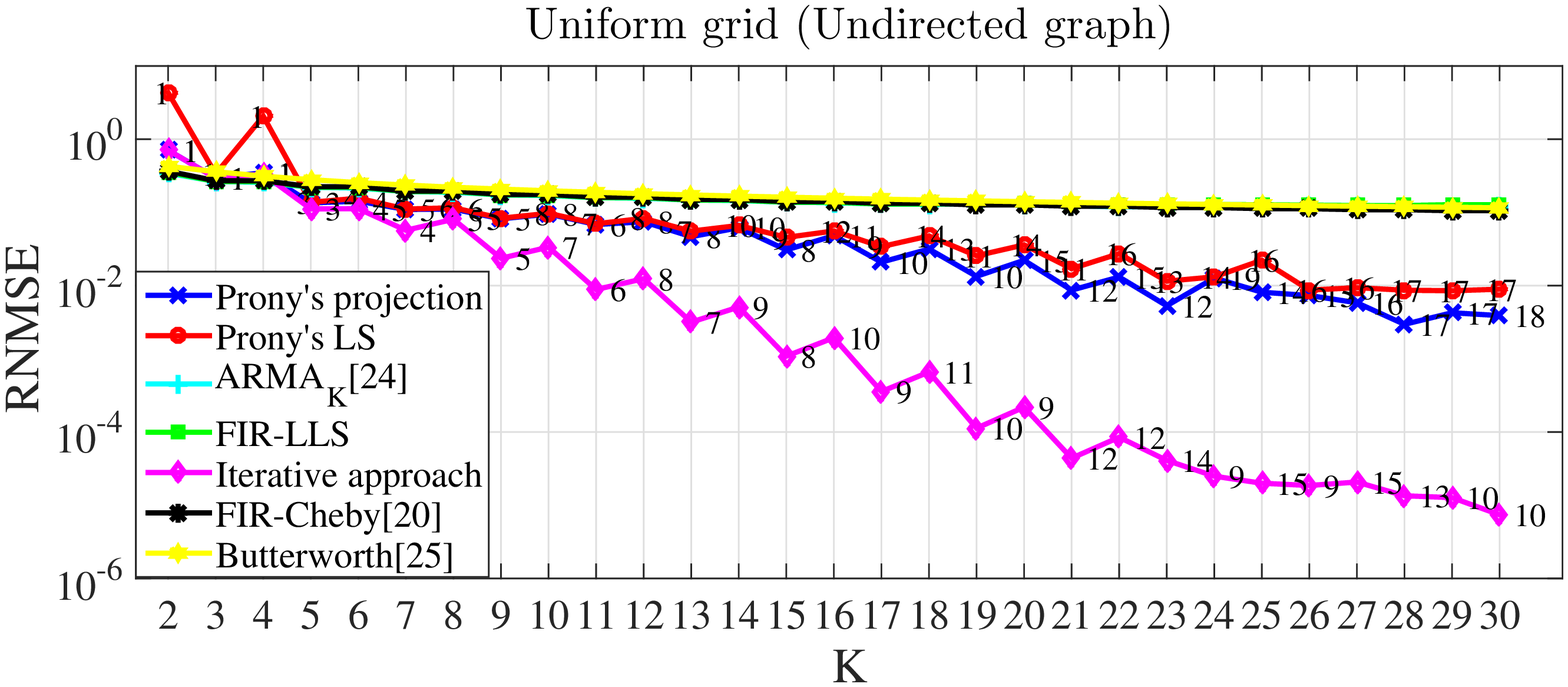}%
\label{subfig1}}
\subfloat[]{\includegraphics[trim={1.5cm 0 2cm 0},clip,width=.48\textwidth]{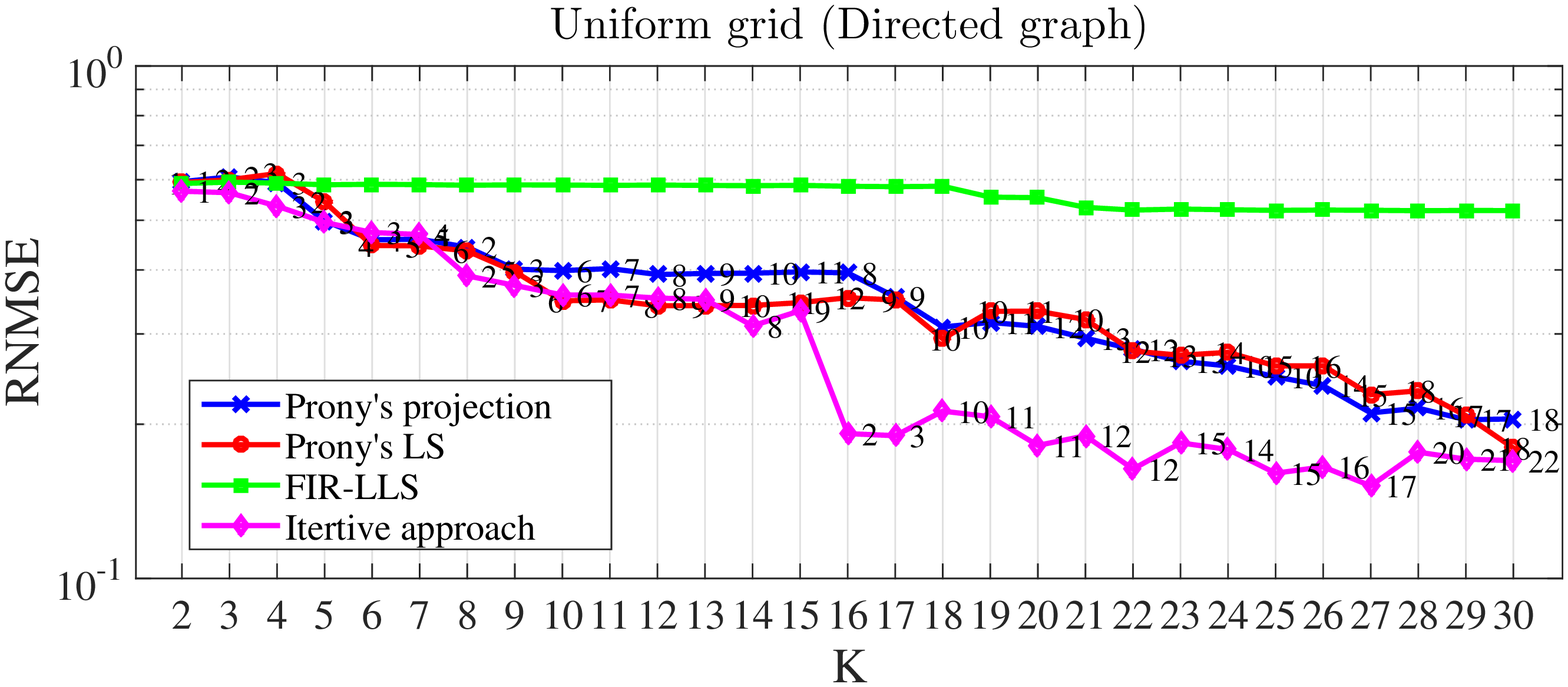}%
\label{subfig2}}\\
\subfloat[]{\includegraphics[trim={1.5cm 0 2cm 0},clip,width=.48\textwidth]{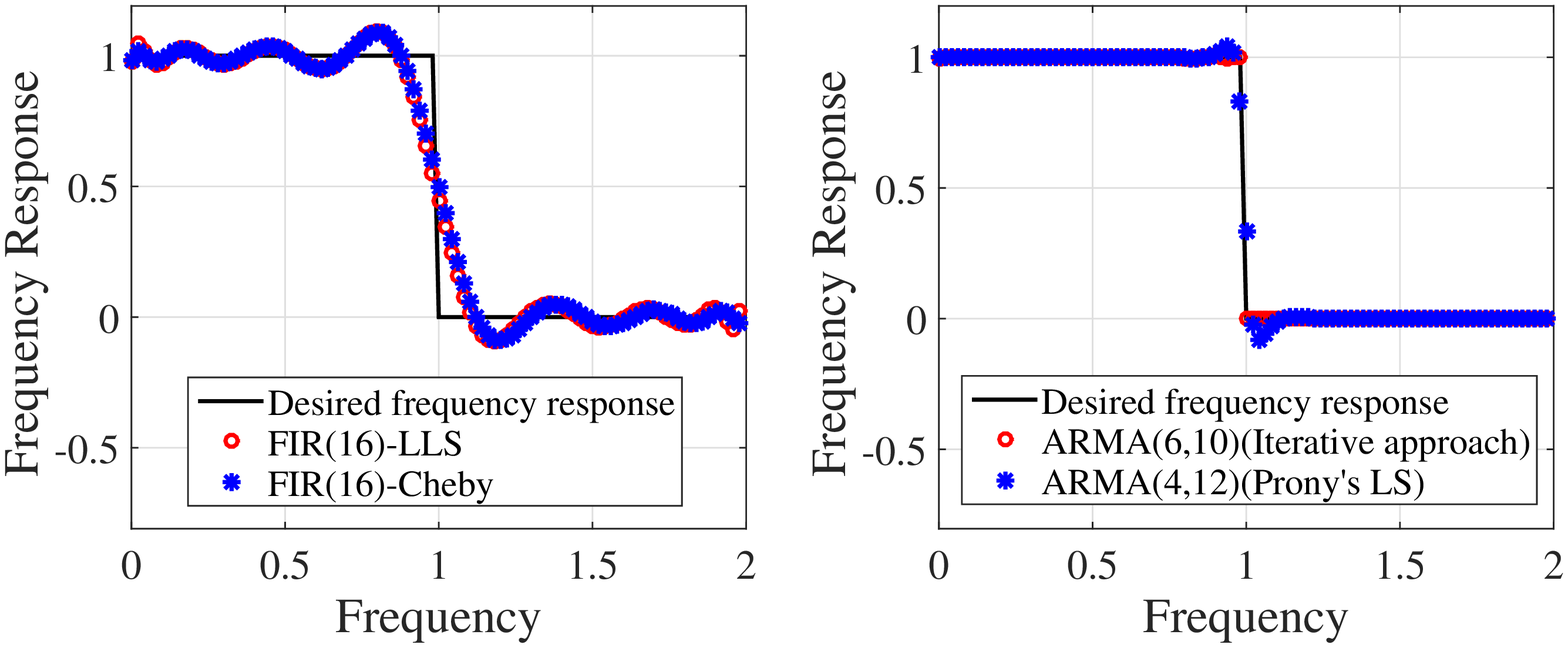}%
\label{subfig3}}
\subfloat[]{\includegraphics[trim={1.5cm 0 2cm 0},clip,width=.48\textwidth]{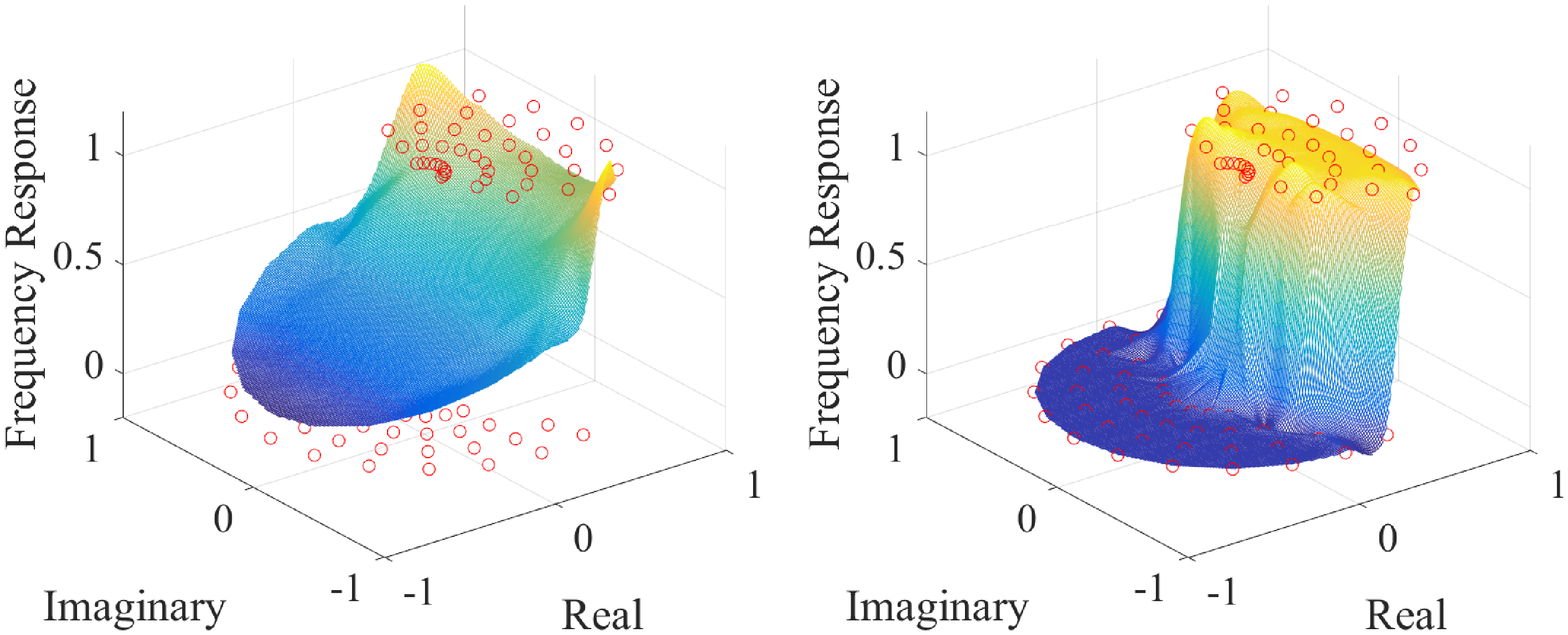}%
\label{subfig4}}\\
\subfloat[]{\includegraphics[trim={1.5cm 0 2cm 0},clip,width=.48\textwidth]{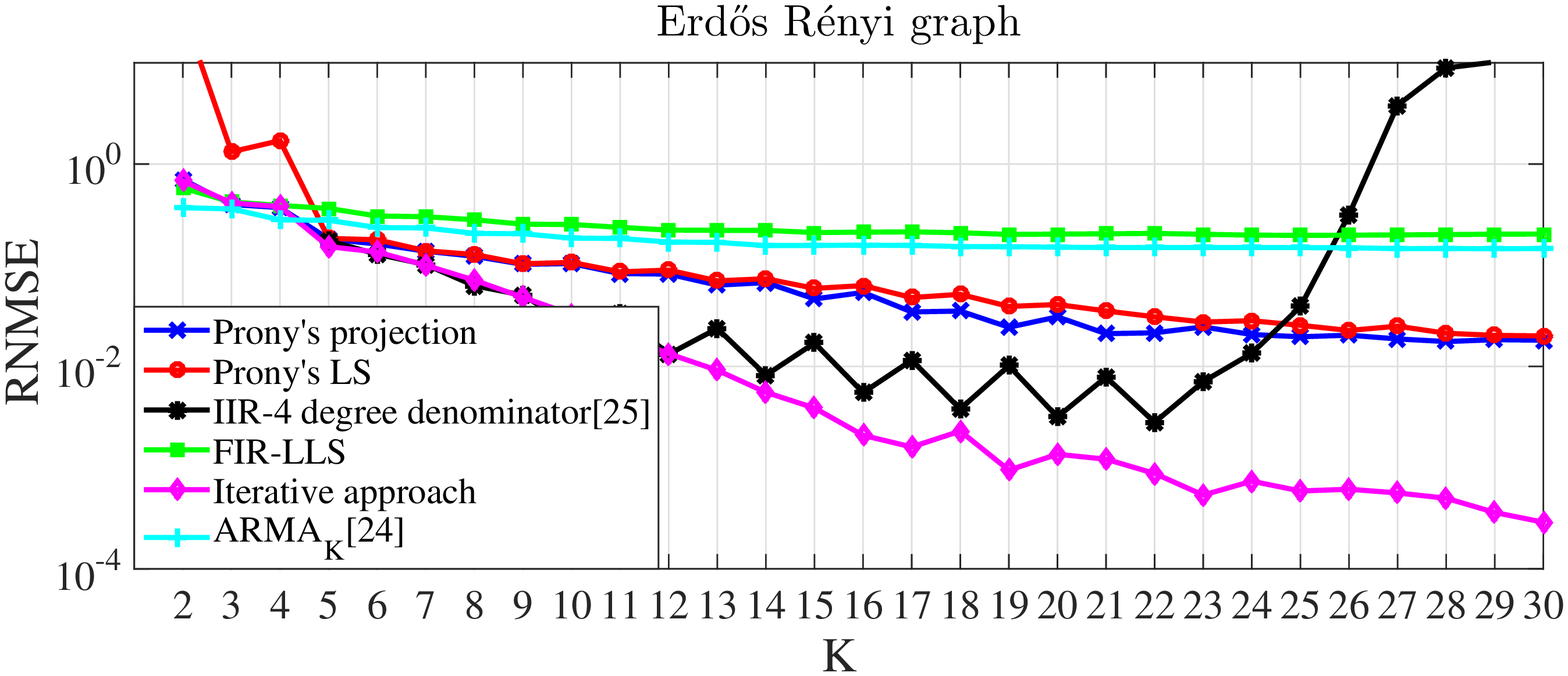}%
\label{subfig5}}
\subfloat[]{\includegraphics[trim={1.5cm 0 2cm 0},clip,width=.48\textwidth]{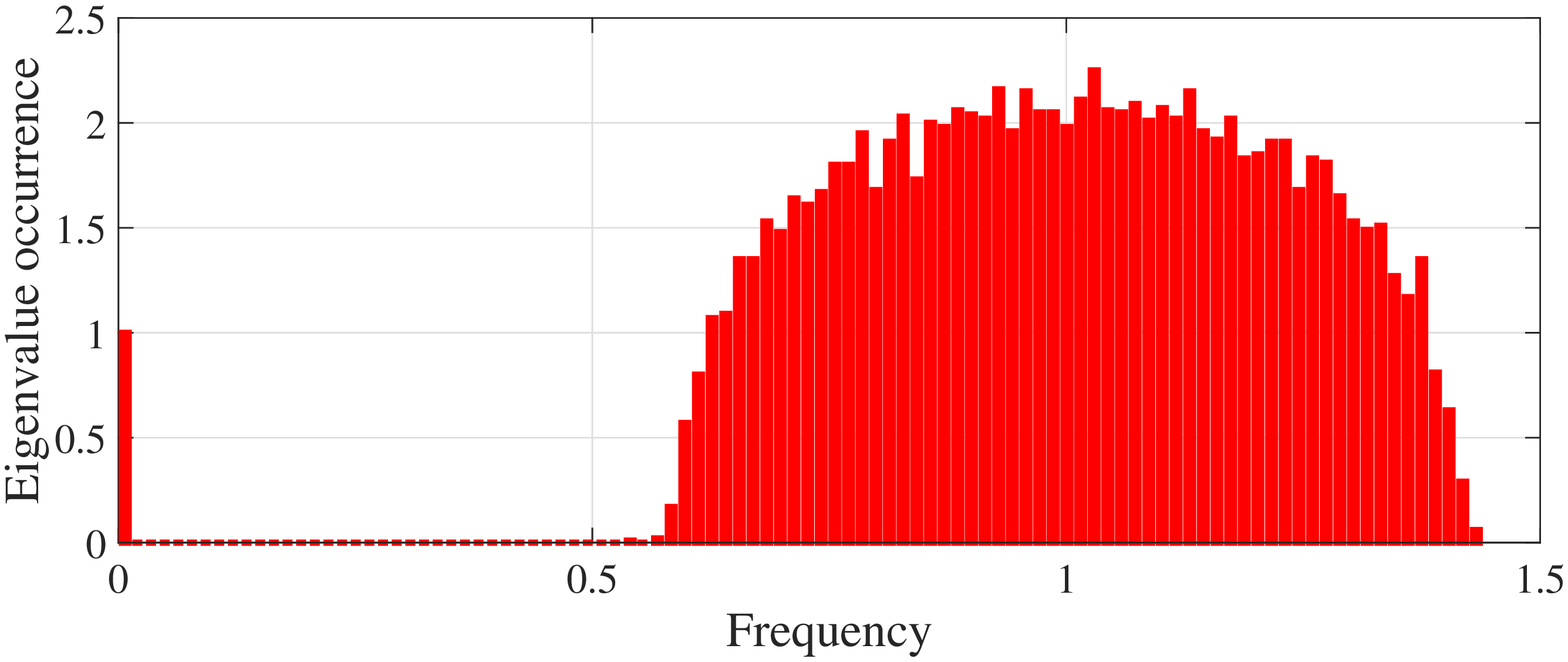}%
\label{subfig6}}
\caption{RNMSE of the proposed design methods for different orders $K$ (such that $P+Q = K$) in approximating an ideal low-pass frequency response. (a) Universal design by gridding the spectrum in $N = 100$ ($\bf{S} = \bf{L}_\text{n}$) points. For the ARMA filters, the order $Q$ is shown in the plot. (b) Universal design with $N = 100$ (${\bf{S} = \bf{A}_\text{n}}$) points. (c) Comparison of FIR and ARMA with same order $K=16$ for an undirected graph. The graph filters correspond to Fig. \ref{fig:simMP}(a). (d) Comparison of FIR and ARMA with same order $K=16$ for a directed graph. The FIR graph filter (left) and ARMA graph filter (right) correspond to the green and pink lines in Fig. \ref{fig:simMP}(b). The desired frequency response is shown in the plot as red points. (e) Results for the average of 100 Erd\H{o}s R\'{e}nyi graphs with $N = 100$ nodes and $p = 0.1$. (f) Eigenvalue occurrence of 100 Erd\H{o}s R\'{e}nyi graph realizations.}
\label{fig:simMP}
\end{figure*}

\section{Numerical data}
\label{Sec.Numerical data}
In this section, we present our numerical evaluation of the proposed methods and compare them with the FIR graph filters. The performance is tested with both synthetic and real data. Our tests with the Molene dataset\footnote{Access to the raw data is through the link: \url{https://donneespubliques.meteofrance.fr/donnees_libres/Hackathon/RADOMEH.tar.gz } } show that ARMA filters are more suitable than FIR filters for lossy data compression, where we can save up to 50\% of memory with very little error. Further, we apply ARMA filters in the context of prediction (as in~\cite{AJ}) and we show that ARMA graph filters outperform FIR graph filters, where with only 4 bits we achieve a reconstruction error of $10^{-3}$. Throughout our simulations we make use of the GSPBox \cite{NYPM}.
%

\subsection{Synthetic simulation results}

In this section we evaluate the performance of the proposed design algorithms in approximating a desired frequency response. The performance is assessed for two different settings, namely a \emph{universal} filter design (see Section~\ref{sec:universal}) and a filter design for an Erd\H{o}s  R\'{e}nyi (ER) graph. For both cases we consider $N = 100$ grid points / nodes \footnote{We remark that more grid points / nodes, i.e., $N = 300, 1000$, result in similar errors and trends as for $N = 100$.}. In both settings, the goal is to approximate the ideal low-pass frequency response introduced in Section~\ref{Sec.Pre} and illustrated in Fig.~\ref{fig:filg}.

\emph{Universal design:} For the universal design, we follow the approach discussed in Section~\ref{Sec.Pre}. For an undirected graph, we consider ${\bf S} = {\bf L}_{\text{n}}$ and sample the interval $[0,2]$ uniformly. For a directed graph, we consider ${\bf S} = {\bf A}_{\text{n}}$ and sample the complex unit disc uniformly in amplitude and phase. We assume $N=100$ grid points for both types of graphs.

\emph{Design for Erd\H{o}s  R\'{e}nyi graph:} For the undirected ER graph~\cite{PA}, we assume that a pair of nodes is connected with a probability $p = 0.1$ and the shift operator is again $\bf{S} = \bf{L}_\text{n}$. Due to the graph randomness we always average the results over 100 different realizations.

In the sequel, we analyze the design methods proposed in Section~\ref{sec.ARMAdes} and compare them to the related FIR filter design. If not mentioned otherwise, we design the FIR filter using the LLS approach of~\eqref{FIRpinv} (FIR-LLS, or simply FIR). The universal FIR design for undirected graphs sometimes also follows the Chebyshev design of~\cite{shuman2011distributed} (FIR-Cheby). We compare the ARMA$(P, Q)$ filter to a FIR$(K)$ graph filter where $P+Q \leq K$ is satisfied. We look for all combinations of $P$ and $Q$ that satisfy $P+Q \leq K$ and pick the combination leading to the best result. Since we want the overall order of the designed ARMA graph filter to be small, we only investigate the range $2 \leq K \le 30$. We measure the approximation accuracy with the root normalized mean square error (RNMSE) of the frequency response of the filter:
\begin{equation}\label{FIRerror}
\text{RNMSE} = \frac{{\| {{\hat{\bf h}} - {\hat{\bf g}}} \|}}{{\| {\hat{\bf h}}\|}}.
\end{equation}
Note that, for a directed graph with complex frequencies, since the filter response can be complex-valued, we only compute the approximation error for the amplitude (absolute value) of the filter response under the assumption that the desired frequency response is real.

\textbf{Performance analysis.}
In Fig. \ref{fig:simMP} we show the RNMSE for the Prony's inspired methods and the iterative approach. Specifically, the depicted RNMSE in Fig. \ref{fig:simMP} (a) (b) and (e) are related to the best combination $(P,Q)$ for each particular $K$ such that $P+Q = K$. The iterative approach is initialized with the solution of Prony's projection method ~\eqref{eq:LS5} and~\eqref{eq:LS6}, to show its potential in improving the RNMSE. Additionally, the FIR, ARMA$_K$~\cite{EA} and IIR~\cite{XM} performances are plotted as a benchmark.

Based on these results we can make the following observations:

i) We can notice that the FIR (FIR-LLS or FIR-Cheby) approximation errors for both universal designs (Fig. \ref{fig:simMP}(a), (b)) and the design for the ER graph (Fig. \ref{fig:simMP}(e)) is the highest, except when $K \le 5$. Further, the FIR approximation accuracy, even when designed for the specific set of ER graph frequencies, does not improve with the order $K$. We believe that this effect is due to the eigenvalue spread of the ER graph, since some of its eigenvalues are more closely spaced than in a uniform grid (see e.g., Fig. \ref{fig:simMP}(f)).

ii) Compared to Prony's method, the iterative approach has a larger design cost but improves the approximation for higher order $K$. Prony's method gives a comparable performance to the iterative approach only up to $K = 8$.
We see that Prony's LS approach is not suitable for the ER graph when $K \le 5$, while for a universal design approach its performance is close to that of Prony's projection method. This highlights that the LS approach should be avoided in graphs that have closely spaced eigenvalues. On the other hand, this issue is overcome by Prony's projection method which gives a small RNMSE also for values $K \le 5$.

iii) As an example, we take the order $K=16$ to show the difference in performance between FIR graph filters and ARMA graph filters in Fig. \ref{fig:simMP}(c), (d). It is remarkable to highlight that the iterative approach outperforms the FIR by several orders, where the latter has a comparable performance only for $K \le 3$. Such a finding shows that the ARMA graph filters are more suitable for applications demanding higher approximation accuracies.

iv) We observe a smaller RNMSE for undirected graphs compared to directed graphs. This is because we can do a fitting on the real line instead of in the complex plane. In contrast to undirected graphs, notice that for directed graphs, as shown in Fig. \ref{fig:simMP} (b), all ARMA graph filter design approaches yield a similar performance.

v) As highlighted in Fig. \ref{fig:simMP} (a), an important role is played by the MA order $Q$ (which is generally larger than $P$). We observe that a higher $Q$ improves the stability of the ARMA filters, specifically for Prony's projection method and the iterative approach where the numerator coefficients are found by minimizing the true error.

vi) If the frequencies are different, the Vandermonde matrix ${\bf \Psi}$ is theoretically full rank (invertible) but generally ill-conditioned. Although this issue is encountered for both FIR and ARMA graph filters, ARMA filters improve the conditioning of the matrix because the filter orders $P$ and $Q$ can be selected much lower than the FIR filter order $K$. Hence, the solution of our design methods has uniqueness, but there might be a conditioning problem when the orders are increased.

vii) For the universal design (Fig. \ref{fig:simMP}(a)) and ER graph (Fig. \ref{fig:simMP}(e)), we also compare our approach with the methods in~\cite{EA,XM}. The ARMA$_K$ graph filter~\cite{EA} has the same order for the nominator and denominator, therefore, we adopt the same value $K$ as order for both the nominator and denominator. Note that this leads to a total order that is twice the order of our ARMA$(P,Q)$ (recall that $K \leq P+Q$). For the universal design, we further compare our approach with the universal Butterworth filter~\cite{XM}. The IIR graph filter~\cite{XM} is then tested on the ER graph. We follow the scenario of~\cite{XM} and use a denominator of degree $4$, leading to a nominator of degree $(K-4)$. The results show that for low orders ($K < 12$), the IIR graph filter~\cite{XM} has a similar performance to our iterative approach. However, with an increasing order $K > 12$, our design method offers a better approximation accuracy.

\begin{figure}[!t]
\centering
\includegraphics[trim={1.5cm 0 2cm 0},clip,width=.48\textwidth]{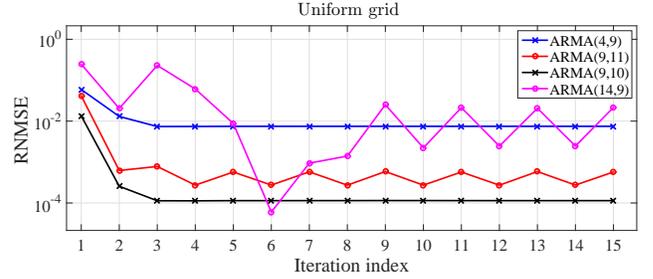}%
\caption{RNMSE of the iterative approach on the universal design with $N=100$ points. Performance evaluation for different ARMA filters which are a few particular cases illustrating monotonic convergence, non-monotonic convergence, and no convergence. }
\label{fig:simMP2}
\end{figure}

\textbf{Iterative approach.} We now analyze in more detail the iterative approach, to highlight its benefits in improving the ARMA filter accuracy compared to Prony's projection approach. We consider two cases with monotonic convergence, namely, an ARMA$(9, 10)$ (characterized by an RNMSE of order 10$^{-2}$ in Prony's projection method, Fig. \ref{fig:simMP} (a)) and an ARMA$(4, 9)$ (characterized by an RNMSE of order 10$^{-1}$ in Prony's projection method, Fig. \ref{fig:simMP} (a)) which are considered due to their low orders. For both cases we initialize the iterations with the solution of Prony's projection method. Note that ARMA$(9, 10)$ is the best combination $P, Q$ of order $K=19$, while ARMA$(4, 9)$ is not the best combination for order $K=13$.
We also consider two filters, the ARMA$(9, 11)$ and ARMA$(14, 9)$ to illustrate that even without monotonic convergence, the approximation accuracies can be improved with our iterative approach.

In Fig. \ref{fig:simMP2} we show the approximation error as a function of the iteration index and we can immediately notice that for those filters with monotonic convergence, the approximation errors reduce in a few iterations. More specifically, for the ARMA$(9, 10)$ the iterative approach reduces the error from $10^{-2}$ to $10^{-4}$. It is also worth noticing that using the iterative approach, the ARMA$(9, 10)$ outperforms also the ARMA$(11, 17)$, which is the best filter that can be designed with Prony's projection method (within the considered range). Similarly, the iterative approach improves the approximation accuracy for the low order filter ARMA$(4, 9)$. Indeed, its performance is now comparable with all other ARMAs and FIRs with much greater orders.
As we mentioned in the previous section, for the non-converging filters, we pick the best approximation result during the iterative procedure, e.g., the performance in the $6$-th iteration of the ARMA$(14, 9)$ filter, which is better than the performance of the ARMA$(9, 10)$ filter.

We remark that the above results concern the approximation accuracy of the filter irrespective of their implementation costs. In the sequel, we address some implementation aspects.

\begin{figure}[!t]
\centering
\includegraphics[trim={1.5cm 0 2cm 0},clip,width=.48\textwidth]{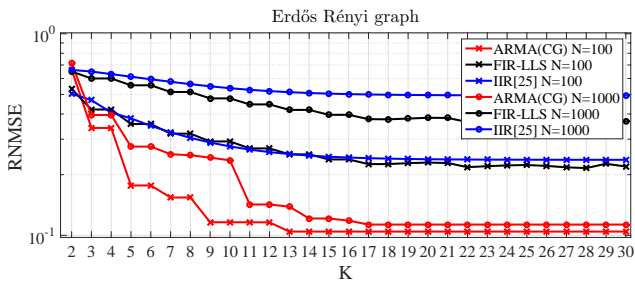}%
\caption{RNMSE of the ARMA graph filter implementation on an Erd\H{o}s R\'{e}nyi graph with $N=100$ and $N=1000$. Performance evaluation for the trade-off between computational cost and approximation accuracy. For CG, the complexity of the ARMA implementation is limited by ${PT + Q \le K}$.}
\label{fig:simCG}
\end{figure}

\textbf{CG implementation performance.}
We now aim at analyzing the ARMA implementation performance using the CG approach w.r.t.~its implementation cost. We implement the universally designed ARMA filter using CG on the ER graph with link probability $p = 0.1$, and consider two different sizes: $N=100$ and $N=1000$. We again use the universally designed FIR and IIR graph filters as benchmarks.
The ARMA filter coefficients are designed universally using the iterative approach with 100 grid points, whereas the FIR filter is designed using LLS also with 100 grid points and the IIR filter following the Butterworth approach~\cite{XM}.
The filter is applied to a white input and the desired frequency response (low pass filter) is compared to the division of the filter output and the input in the frequency domain. In Fig.~\ref{fig:simCG}, we show the performance of the ARMA filter (Algorithm 1) when the CG is halted after $T$ iterations such that $PT + Q \le K$ holds, i.e., the ARMA filter has a smaller or the same implementation cost compared to the FIR filter. For the CG, we set $\varepsilon  = {10^{ - 3}}$.
The IIR filter has the same order $K$ as the FIR filter and is given a maximum number of iterations of $T=30$. The results show that the ARMA filter has a lower approximation error than other alternatives with a similar or smaller complexity.
Since we here compare the filters for a similar implementation complexity, the RNMSE gap is smaller compared to the previous scenario in Fig. \ref{fig:simMP} (a) and (e).
To highlight the benefits of the universal design approach, we consider the ER graph with two different sizes. In Fig.~5, we notice that when increasing the filter order ($K>16$), the performance of the ARMA graph filter for different size graphs becomes similar. Even for the case with $N=1000$, the universal design based on 100 grid points is a wise choice and yields a good performance.

Although the aforementioned results are obtained using synthetic data, they highlight the potential of ARMA filters to improve the performance w.r.t. FIR graph filters.
The above results can be useful in practice for spectral clustering; building graph filter banks, or designing graph wavelets, where we propose the use of ARMA filters instead of FIR filters.

As we will see next, this improvement in performance of ARMA filters is also present in real data applications.

\subsection{Graph signal interpolation.}

\begin{figure}
  \centering
  \includegraphics[trim={1.5cm 0 2cm 0},clip,width=.48\textwidth]{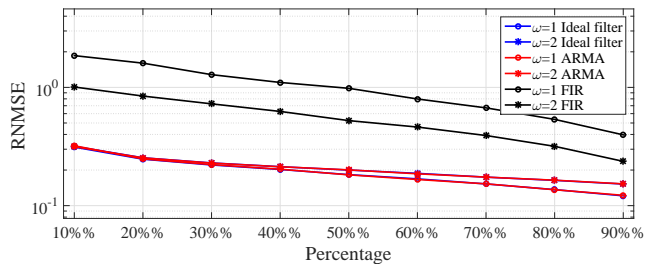}
  \caption{RNMSE of the ARMA graph filter for interpolation of the Molene data set, where $ \omega =1, 2$. As two comparisons, ideal graph filter and FIR filter with order $K=20$ are shown with the same values of $\omega $.}
  \label{fig:Inter}
\end{figure}

We now illustrate the performance of ARMA graph filters in interpolating the missing values in the Molene weather data set. The data set contains hourly observations of temperature measurements collected in January 2014 in the region of Brest (France). The undirected graph containing the 32 cities (nodes) is built according to \cite{SSG}, which accounts for the smoothness of the data w.r.t. the graph structure. We consider that a portion of the graph signal is missing, and by exploiting the smoothness prior we aim to reconstruct the overall graph signal from noisy measurements.

{\textbf{ Experimental set up.}} Given ${\bf x}^\prime$ the observed signal and ${\bf x}$ the original graph signal, this interpolation problem is formulated as \cite{SAA} \cite{mao2014image}:

\begin{equation}\label{eq:inter1}
\mathop {\min }\limits_{\bf{x}} \left\| {{\bf{T}}({\bf{x}} - {{\bf{x}}^\prime })} \right\|_2^2 + \omega {{\bf{x}}^T}{{\bf{L}}_{{n}}}{\bf{x}}
\end{equation}
where ${\bf{T}}$ is a diagonal matrix with ${{T_{ii}} = 1}$ if ${x_i}$ is known and ${{T_{ii}} = 0}$ otherwise; ${\omega}$ is the weight for the prior.
The optimal solution of \eqref{eq:inter1} is
\begin{equation}\label{eq:inter2}
{\bf{\tilde x}} = {({\bf{T}} + \omega {{\bf{L}}_{{n}}})^{ - 1}}{{\bf{x}}^\prime },
\end{equation}
which by considering ${\bf{P}} = {\bf{T}} + \omega {{\bf{L}}_{{n}}}$ is solved through the ARMA graph filter \eqref{L4}.
We consider the CG to implement \eqref{eq:inter2} where $\varepsilon$ is set to $10^{-2}$ and the maximum number of iterations $T$ to $20$. As a comparison, for the FIR graph filter, the coefficients are found as the solution of
\begin{equation}\label{eq:interFIR}
\mathop {\min }\limits_{{g_k}} \left\| {{{({\bf{T}} + \omega {{\bf{L}}_n})}^{ - 1}} - \sum\nolimits_{k = 0}^K {{g_k}{\bf{L}}_n^k} } \right\|_F^2
\end{equation}
where the ${g_k}$ values represent the FIR coefficients.

\textbf{ Results.} In Fig. \ref{fig:Inter} we show the RNMSE between the reconstructed signal ${\bf{\tilde x}}$ and the original one ${\bf x}$ as a function of the portion of missing data. Additionally, to construct the observed signal ${\bf x}^\prime$, we add a zero-mean Gaussian noise with variance ${\sigma ^2}=10^{-2}$ to the original signal ${\bf x}$ and randomly wipe off signals up to the specific percentage. The performance is average over all 744 observations. We plot the numerical RNMSE for different percentages and two $\omega$ values. These results show that the RNMSE reduces for the ARMA graph filter when the percentage of the known values increase.
As a comparison, we notice that the ARMA graph filter offers a similar performance to the ideal graph filter. The FIR graph filter ($K=20$) yields a worse result in this case.

\subsection{Data compression with graph filters}

\begin{figure}[!t]
\centering
\subfloat[]{\includegraphics[trim={1.5cm 0 2cm 0},clip,width=.48\textwidth]{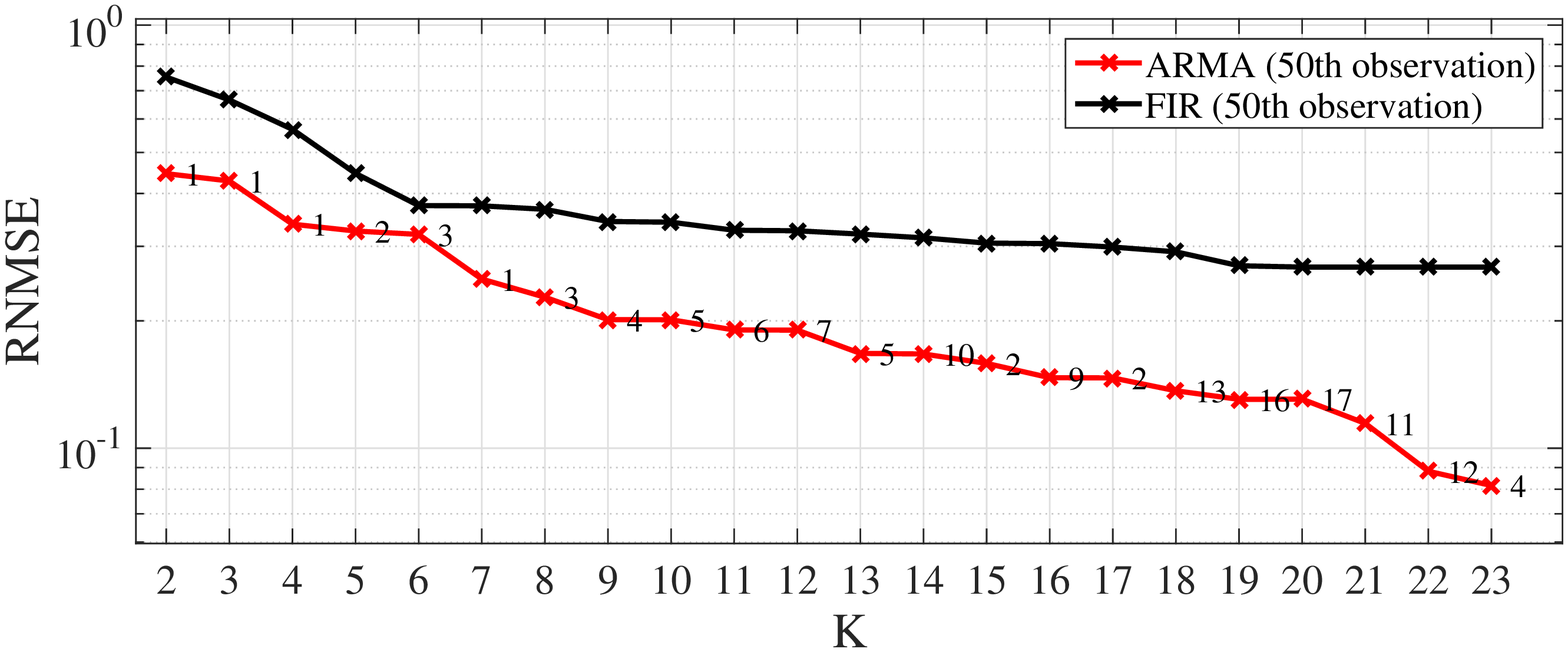}%
\label{subfig1}}\\
\subfloat[]{\includegraphics[trim={1.5cm 0 2cm 0},clip,width=.48\textwidth]{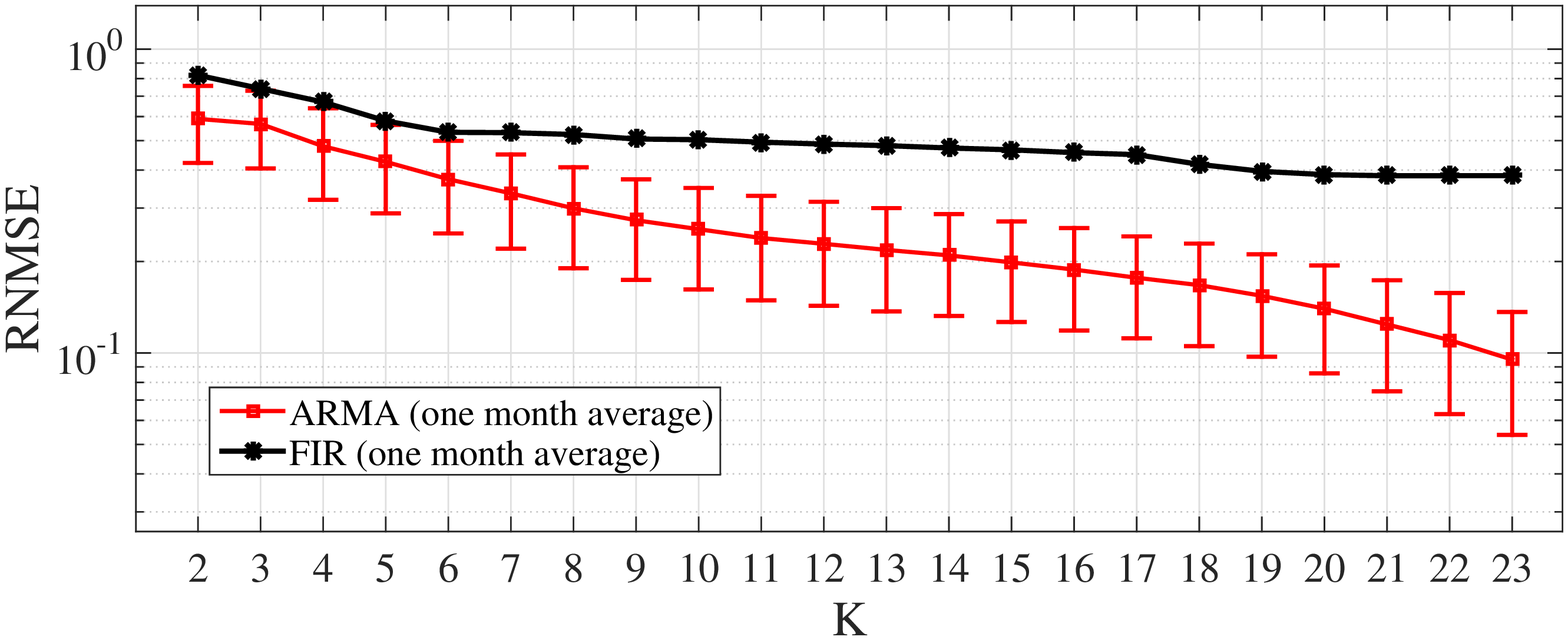}%
\label{subfig2}}
\caption{RNMSE between the data spectrum and the filter frequency response as a function of filter order $K$. (a) Illustration of the RNMSE of the ARMA graph filter and the same order FIR filter for the 50th observation. The order $Q$ is shown in the plot and $P + Q = K$. (b) Average RNMSE over all 744 temperature realizations (one month) for different filter orders. For the ARMA filter, each error bar shows the standard deviation of the approximation error for order $K$. }
\label{fig:MD}
\end{figure}

Our goal, in this subsection, is to show that ARMA filters of low orders can be used to represent the data and perform compression.

{\textbf{ Experimental set up.}} We consider fitting a small order ARMA graph filter to each data realization and then store the filter coefficients instead of the actual data. We now create the graph as a directed 6-nearest neighbor connection. In the directed graph, each vertex is connected to its six closest nodes by means of directed edges \cite{ASJ2}. The weight of the edge between $v_m$ and $v_n$ is given as
\begin{equation}\label{eq:createA}
[{\bf A}]_{n,m} = \frac{{{e^{ - d_{n,m}^2}}}}{{\sqrt {\sum\nolimits_{k \in {\mathcal N}_n} {{e^{ - d_{n,k}^2}}} \sum\nolimits_{l \in {\mathcal N}_m} {{e^{ - d_{m,l}^2}}} } }}
\end{equation}
where $d_{n,m}$ represents the geometric distance between nodes $v_n$ and $v_m$ and ${\mathcal N}_n$, ${\mathcal N}_m$ represent the sets of neighbors of node $v_n$ and $v_m$. Note that the resulting matrix ${\bf{A}}$ is normal, i.e., $\left\| {\bf{A}} \right\| = 1$. For every data realization $\bf{x}$, we take the GFT to have $\hat{\bf{x}}$ and fit it to an ARMA$(P, Q)$ graph filter. The filter coefficients are derived using the iterative approach with the initial condition given by Prony's projection method. We measure the compression performance as the RNMSE between the compressed signal and the real one $\bf{x}$. As a benchmark, we again consider the FIR$(K)$ with $K=P+Q$.

\textbf{ Results.} In Fig. \ref{fig:MD}(a), we show the RNMSE as a function of $K$ for the 50-th observation. We observe that the ARMA filter achieves a smaller RNMSE than the FIR filter even for small orders $K$. As expected, when $K$ approaches $N$, we have a smaller error but we also see that the gap in performance between the ARMA and FIR filters increases. This result goes in line with what we obtained in the previous section for synthetic data.

To further quantify the above observations, Fig. \ref{fig:MD}(b) depicts the average performance over all observations. We still notice that the ARMA graph filters achieve a smaller RNMSE than FIR graph filters, and that the RNMSE decreases for higher values of $K$. With the above approach, a compression ratio of $25{\rm{\% }}$ ($K=23$) is achieved with an RNMSE of $10^{-1}$ can be tolerated. Note that next to signal compression, the ARMA model can also be used to reconstruct the graph power spectrum of stationarity graph signals from a subset of the nodes \cite{chepuri2017graph}.

\textbf{ Remark 3.} To achieve further compression one can exploit also the stationarity of the signal over time. Thus, instead of fitting a graph filter to each individual observation, one approach may consider fitting a joint graph-temporal filter \cite{isufi2016separable}, \cite{isufi20162} to the time-varying data.

\subsection{Linear prediction with ARMA filters}

Inspired by \cite{AJ}, we also test linear prediction (LP) on graphs using ARMA graph filters. We consider the Molene data set and again compare the ARMA graph filters with the FIR graph filters~\cite{AJ}. The considered problem contains two parts, namely the forward (prediction) part and the backward (synthesis) part. In the forward filtering, the residual between the graph signal and the filter frequency response is calculated and quantized. Next, the backward filter considers building an approximation of the graph signal from the quantized residual. For the ARMA filters we use a variant of the iterative approach to find the filter coefficients, while for the FIR filter we follow \cite{AJ}. For the graph shift operator ${\bf{S}}$, we consider both the directed graph created by ~\eqref{eq:createA} and the undirected graph \cite{SSG}.

{\textbf{ Experimental set up.}} For the ARMA filter, given the graph signal ${\bf{x}}$, the residual ${\bf{r}}$ related to signal prediction is given by
\begin{equation}\label{eq:pred1}
{\bf{r}} = {\bf x} - g({\bf S}) {\bf x} = {\bf{x}} - {(\sum\limits_{p = 0}^P {{a_p}{{\bf{S}}^p}} )^{ - 1}}(\sum\limits_{q = 0}^Q {{b_q}{{\bf{S}}^q}} ){\bf{x}}.
\end{equation}
Notice that next to the constraint $a_0=1$ we had before, it is important to set $b_0=0$ in order to avoid the trivial solution.
Similar to Prony's method, we can derive also a modified residual as
\begin{equation}\label{eq:pred2}
{\bf r}' = (\sum\limits_{p = 0}^P {{a_p}{{\bf{S}}^p}} ) {\bf x} - (\sum\limits_{q = 0}^Q {{b_q}{{\bf{S}}^q}} ) {\bf x}. \end{equation}
To relate this prediction problem to filter design, we can look at the residual and modified residual in the frequency domain, leading to
\begin{equation}\label{eq:pred3}
\hat{\bf{r}} = \hat{\bf{x}} \circ ( {\bf 1}_N - \text{diag}( {\bf \Psi}_{P+1} {\bf a} )^{-1} {\bf \Psi}_{Q+1} {\bf b} ), \end{equation}
and
\begin{equation}\label{eq:pred4}
 \hat{\bf r}' = \hat{\bf{x}} \circ  [ {\bf 1}_N \circ ( {\bf \Psi}_{P+1} {\bf a} ) - {\bf \Psi}_{Q+1} {\bf b} ].
\end{equation}
Hence, up to the element-wise multiplication with $\hat{\bf{x}}$, this residual $\hat{\bf{r}} $ and modified residual $ \hat{\bf r}' $ look like the error ${\bf e}$ in~\eqref{eq:original_error} and modified error in ${\bf e}'$ in~\eqref{eq:other}, respectively, with $\hat{\bf h}$ replaced by the all-one vector ${\bf 1}_{N}$. As a result, all previous design methods can still be used. They only need to be adapted with an appropriate weighting (coming from $\hat{\bf{x}}$) and with the constraint $b_0=0$.

Once the filter coefficients that (approximately) minimize the residual ${\bf r}$ are found, this residual is quantized with $B$ bits (resulting in ${\bf{r}}_\text{q}$) and forwarded. Then, by applying the backward filter ${\bf{ H}} = {({\bf{I}} - g({\bf{S}}))^{ - 1}}$ to the residual, the approximated signal $\widetilde {\bf{x}} = {\bf{{ H}r_\text{q}}}$ is constructed at the receiving side.

We consider ARMA graph filters for $K \le 10$ ($K=P+Q$) and for every order $K$, the residual ${\bf{r}}$ is quantized with different numbers of bits. From the $B$ bits, we spend one bit on the sign, $ b = \left\lceil {{{\log }_2}(\max ( [{\bf{r}}]_i ))} \right\rceil $ bits on the integer part, and the rest of the $(B-b-1)$ bits on the decimal fraction.

\textbf{ Results.} We quantify the performance in terms of RNMSE between the predicted signal $\tilde{{\bf{x}}}$ and the original one ${\bf{x}}$.

\begin{figure}[!t]
\centering
\subfloat[]{\includegraphics[trim={1.5cm 0 2cm 0},clip,width=.48\textwidth]{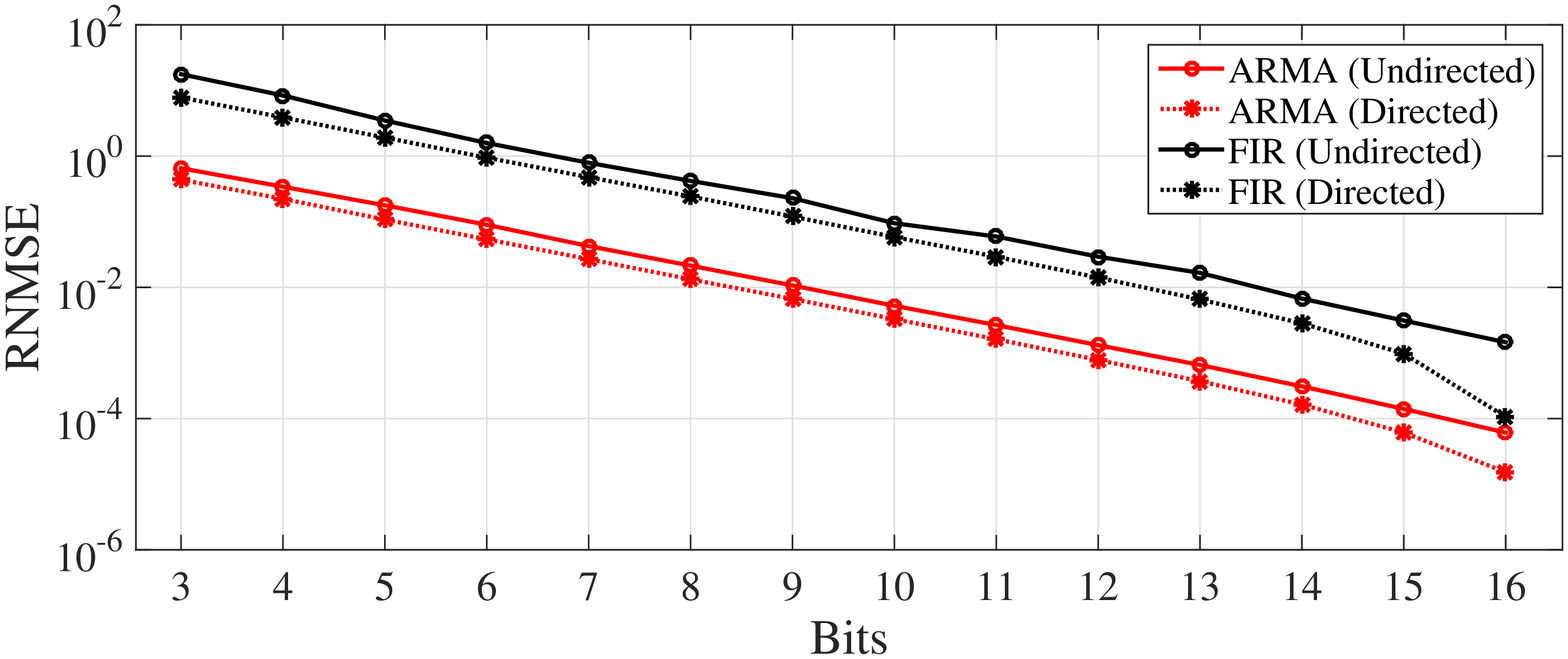}%
\label{subfig1}}\\
\subfloat[]{\includegraphics[trim={1.5cm 0 2cm 0},clip,width=.48\textwidth]{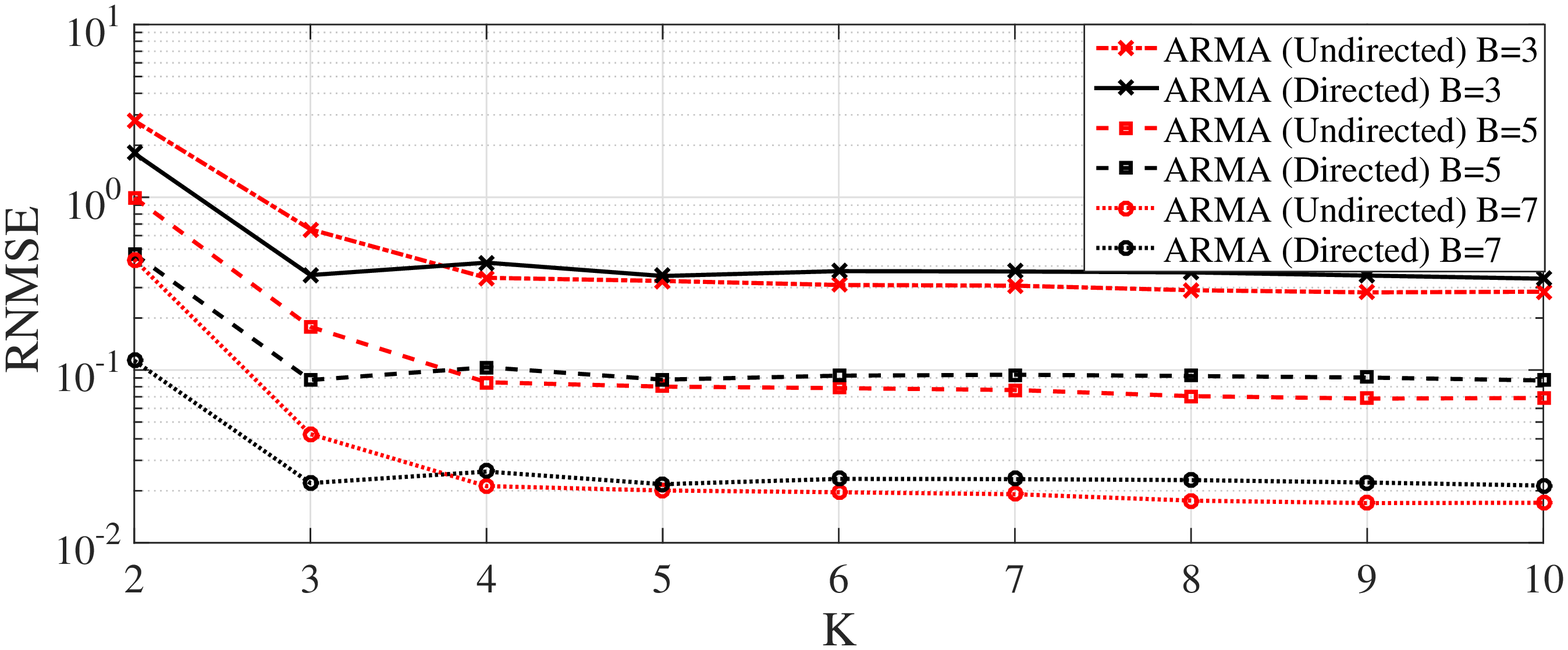}%
\label{subfig1}}
\caption{Average RNMSE of linear prediction on the Molene temperature data set. (a) Average RNMSE of the approximated signal as a function of the number of bits ($B$) for filter order $K=3$. (b) Average RNMSE of the estimated signal for different order ARMA filters evaluated for $B=3, 5, 7$.}
\label{fig:PD}
\end{figure}

The average approximation error over all 744 realizations is shown in Fig. \ref{fig:PD}(a) as a function of the number of bits $(B)$ used in the quantization for $K=3$. We can notice that in a direct comparison with the FIR filters the approximation error of the ARMA graph filters is more than one order of magnitude lower. For both filters, as expected, more quantization bits $B$ lead to a better approximation accuracy. Such findings suggest once again that ARMA filters are more suitable than FIR filters for applications demanding higher approximation accuracies.

To better highlight the performance of the ARMA filters, in Fig.\ref{fig:PD}(b) we show the RNMSE as a function of the filter order $K$ for different values of $B$. These results show that the approximation error for $K>4$ remains constant, similar to what was observed for FIR filters in \cite{AJ}. This observation suggests that small order filters are preferred for this application. Note that the performance for directed and undirected graphs is almost the same. The directed graph gives the best performance with $K=3$ while for $K>3$ the undirected graph gives a lower error. To conclude, we can say that using an ARMA graph filter with $K=4$ and $B=7$ (instead of 16 bits) we can reconstruct the data with an error of order ${10^{ - 2}}$ and save $62.5\% $ in transmission costs.

\section{Conclusions}
\label{Sec.Concl}
In this work, we have presented ARMA graph filters as well as different methods to perform the filter design on both directed and undirected graphs. The first two filter design approaches are inspired by Prony's method which focus on minimizing some modified errors. The third one iteratively minimizes the original error of the design problem. The iterative approach can be initialized with the solution from one of the previous methods, which suggests that its performance can be improved by the iterative approach. Our theoretical findings are surrogated by numerical results on both synthetic and real data. In a direct comparison with the FIR graph filters, ARMA filters have shown to be more suitable for filter approximation, data interpolation, data compression and linear prediction on graphs.

\appendices

\section{}
Since we assume that the shift operator ${\bf S}$ is real-valued and diagonalizable, the graph frequencies $\lambda _n$ (eigenvalues) can be grouped into three sets: $1 \le n \le M$, $M+1 \le n \le 2M$ and $2M+1 \le n \le N$. The first and second groups are complex conjugate pairs while the last group consists of the real-valued frequencies. Note that this classification only changes the order of the frequencies, and has no influence on the results of the filter coefficients ${g}_k$.

Thus, we can split the Vandermonde matrix ${{\bf{\Psi }}_{K + 1}}$ and write~\eqref{eq:FIRproblem} as
\[\begin{array}{l}
\mathop {\min }\limits_{\bf{g}} \| \hat {\bf{h}} - {{\bf{\Psi }}_{K + 1}}{\bf{g}}\| {^2}\\
 = \mathop {\min }\limits_{\bf{g}} {\| {[\hat {\bf{h}}_1^H,\hat {\bf{h}}_2^H,\hat {\bf{h}}_3^T]^T - [{\bf{\Psi }}_{1}^H,{\bf{\Psi }}_{2}^H,{\bf{\Psi }}_{3}^T]^T{\bf{g}}} \|^2}
\end{array}\]
where the three blocks of matrices and vectors belong to the three different groups.

With $1 \le n \le M$, we use the $n$th and $(M+n)$th frequencies to represent a conjugate pair for the first and second groups of frequencies. Since ${\lambda _n} = \lambda _{M+n}^*$, the corresponding elements inside the Vandermonde matrix satisfy
${[{{\bf{\Psi }}_{1}}]_{n,k}} = [{{\bf{\Psi }}_{2}}]_{M+n,k}^*$, and thus we have ${\bf{\Psi}}_{1} = {\bf{\Psi}}^{*}_{2}$.

According to Property 2, for the frequency pair ${\lambda _n} = \lambda _{M+n}^*$, the corresponding desired frequency response satisfies ${{\hat h}_n} = \hat h_{M+n}^*$ and thus, we also have $\hat {\bf{h}}_1 = \hat {\bf{h}}^{*}_2$. Meanwhile, for $2M+1 \le n \le N$, we have a real-valued ${\bf {\Psi}}_3$ and a real-valued $\hat {\bf{h}}_3$ since the corresponding frequencies $\lambda_n$ inside this range are real-valued.

Now, we can rewrite the solution of~\eqref{eq:FIRproblem} as
\[\begin{array}{l}
\hat{\bf g}  = {\bf{\Psi }}_{K + 1}^\dag \hat {\bf{h}}\\
=({\bf{\Psi}}_{1}^H {\bf{\Psi}}_{1} + {\bf{\Psi}}_{2}^H {\bf{\Psi}}_{2} +  {\bf{\Psi}}_{3}^T {\bf{\Psi}}_{3})^{-1}({\bf{\Psi}}_{1}^H \hat{\bf{h}}_1 \\
 + {\bf{\Psi}}_{2}^H \hat{\bf{h}}_2 + {\bf{\Psi}}_{3}^T \hat{\bf{h}}_3)\\
 =({\bf{\Psi}}_{1}^H {\bf{\Psi}}_{1} + {\bf{\Psi}}_{1}^T {\bf{\Psi}}_{1}^{*} +   {\bf{\Psi}}_{3}^T {\bf{\Psi}}_{3})^{-1}({\bf{\Psi}}_{1}^H \hat{\bf{h}}_1 \\
  + {\bf{\Psi}}_{1}^T \hat{\bf{h}}_1^{*} + {\bf{\Psi}}_{3}^T \hat{\bf{h}}_3 ).
\end{array}\]
It is obvious that ${{\bf{\Psi}}_{1}^H {\bf{\Psi}}_{1} + {\bf{\Psi}}_{1}^T {\bf{\Psi}}_{1}^{*}}$ and ${\bf{\Psi}}_{1}^H \hat{\bf{h}}_1 + {\bf{\Psi}}_{1}^T \hat{\bf{h}}_1^{*}$ are real-valued. Hence, solving~\eqref{eq:FIRproblem} leads to a real-valued solution.

\section{}
The error of the iterative approach on ${\boldsymbol{\alpha }}$ and $ {\boldsymbol{\beta}}$ is given by
\begin{equation}\label{eq:A1}
{\bf{e}}^{(i+1)} = {{{\boldsymbol{\gamma }}^{(i)}} \circ ({\hat {\bf h}} \circ {{\boldsymbol{\alpha }}}) - {{\boldsymbol{\beta }}} \circ {{\boldsymbol{\gamma }}^{(i)}}}
\end{equation}
By extending ${\boldsymbol{\alpha}}$ and $ {\boldsymbol{\beta}}$, we can rewrite \eqref{eq:A1} as
\begin{equation}\label{eq:A2}
{{\bf{e}}^{(i + 1)}} = {{\boldsymbol{\gamma }}^{(i)}} \circ {\hat {\bf h}} \circ ({{\bf{\Psi }}_{P + 1}}{{\bf{a}}}) - ({{\bf{\Psi }}_{Q + 1}}{{\bf{b}}}) \circ {{\boldsymbol{\gamma }}^{(i)}}
\end{equation}
The first term in the right hand side of \eqref{eq:A2} can be expressed as
\begin{equation}\label{eq:A3}
\begin{array}{l}
{{\boldsymbol{\gamma }}^{(i)}} \circ {\hat {\bf h}} \circ ({{\bf{\Psi }}_{P + 1}}{{\bf{a}}}) = {{\boldsymbol{\gamma }}^{(i)}} \circ [{\hat {\bf h}} \circ ({{\bf{\Psi }}_{P + 1}}{{\bf{a}}})]\\
 = {{\boldsymbol{\gamma }}^{(i)}} \circ \left\{ {[{{\bf{\Psi }}_{P + 1}} \circ ({\hat {\bf h}}{\bf 1}_{P + 1}^{\rm{T}})]{{\bf{a}}}} \right\}\\
 = [({{\boldsymbol{\gamma }}^{(i)}}{\bf{1}}_{P + 1}^{\rm{T}}) \circ {{\bf{\Psi }}_{P + 1}} \circ ({\hat {\bf h}}{\bf 1}_{P + 1}^{\rm{T}})]{{\bf{a}}}.
\end{array}
\end{equation}
Similarly, the second term in the right hand side of \eqref{eq:A2} is rewritten as
\begin{equation}\label{eq:A4}
\begin{split}
({{\bf{\Psi }}_{Q + 1}}{{\bf{b}}}) \circ {{\boldsymbol{\gamma }}^{(i)}}\! =\! [({{\boldsymbol{\gamma }}^{(i)}} {\bf 1}_{Q+1}^{\rm T})\circ {{\bf{\Psi }}_{Q + 1}}]{{\bf{b}}}.
\end{split}
\end{equation}
Finally, we define \eqref{eq:A3} and \eqref{eq:A4} as ${{\bf{H}}^{(i)}}{{\bf{a}}}$ and ${{\bf{B}}^{(i)}}{{\bf{b}}}$, respectively.
This trivially leads to \eqref{eq:newIT2}.

\bibliographystyle{IEEEtran}
\bibliography{main0520}

\end{document}